\newcommand{
    \begin{figure}[]
        \centering
        
        \def\svgwidth{\linewidth}
        \input{}
        \caption{}
        \label{}
    \end{figure}
}[7]{
    \begin{figure#3}[#2]
        \centering
        #4
        \def\svgwidth{#5\linewidth}
        \input{#1}
        \caption{#6}
        \label{#7}
    \end{figure#3}
}
\newcommand{
\begin{figure}[t]
    \centering
    
    \begin{subfigure}{.5\textwidth}
        \centering
        \def\svgwidth{\linewidth}
        \input{}
        \caption{}
    \end{subfigure}%
    \begin{subfigure}{.5\textwidth}
        \centering
        \def\svgwidth{\linewidth}
        \input{}
        \caption{}
    \end{subfigure}
    \caption[]{}
    \label{}
\end{figure}
}[9]{
\begin{figure#2}[t]
    \centering
    #1
    \begin{subfigure}{.5\textwidth}
        \centering
        \def\svgwidth{#4\linewidth}
        \input{#3}
        \caption{}
    \end{subfigure}%
    \begin{subfigure}{.5\textwidth}
        \centering
        \def\svgwidth{#6\linewidth}
        \input{#5}
        \caption{}
    \end{subfigure}
    \caption[#8]{#7}
    \label{#9}
\end{figure#2}
}
\newcommand{\tableTop}[3] {
\begin{table}[#1]
    #3
    \centering
    \begin{tabular}
        #2
    \end{tabular}
\end{table}
}
\begin{document}
\title{Zero-Error Tracking for Autonomous \\
Vehicles through Epsilon-Trajectory Generation}
\subtitle{}
\author{Clint Ferrin \and Greg Droge \and  Randall Christensen}
\institute{C. Ferrin \at  Utah State University, Logan, Utah 84322 USA\\
Tel.: +1-435-554-8176\\
\email{clint.ferrin@gmail.com}\\
}
\maketitle
\begin{abstract}
This paper presents a control method and trajectory planner for vehicles
with first-order nonholonomic constraints that guarantee asymptotic
convergence to a time-indexed trajectory. To overcome the nonholonomic
constraint, a fixed point in front of the vehicle can be controlled
to track a desired trajectory, albeit with a steady-state error. To
eliminate steady state error, a sufficiently smooth trajectory is
reformulated for the new reference point such that, when tracking
the new trajectory, the vehicle asymptotically converges to the original
trajectory. The resulting zero-error tracking law is demonstrated
through a novel framework for creating time-indexed Clothoids. The
Clothoids can be planned to pass through arbitrary waypoints using
traditional methods yet result in trajectories that can be followed
with zero steady-state error. The results of the control method and
planner are illustrated in simulation wherein zero-error tracking
is demonstrated.

\keywords{nonholonomic systems, motion planning, feedback control, autonomous
vehicles, mobile robots, dynamic partial feedback linearization.} 
\end{abstract}

\section{Introduction}

The control of mobile robots continues to be a focus of academic and
industrial research, especially due to a demand for high-precision
path following for a variety of mobile robots. Though the dynamic
equations for mobile robots can vary dramatically, they share similar
trajectory-tracking limitations due to their nonholonomic constraints
and nonlinear equations of motion that warrant continued research
for vehicle control and path planning e.g. \cite{Kapitanyuk2018}.

One limitation of vehicles with first-order nonholonomic constraints
is that they cannot move orthogonal to their direction of motion \cite{DeLuca2001Chapter7M}.
This limitation is overcome through various approaches including open-loop
control \cite{Murray1993}, linearizing about a trajectory \cite{DeLuca1998},
or reformulating the control problem using a path-following controller
that is not time-dependent \cite{Aguiar2005}. The most common approach
is to use a path-following controller which separates the system into
two separate controllers\cite{Aguiar2005,Kapitanyuk2018}. One controller
governs the vehicle's velocity, while the other controller uses the
vehicle's steering input to converge to the desired path. Though the
path-following controller addresses the issue of controllability,
new complexities arise such as how to treat self-intersecting paths,
and how to optimally calculate which point on the path to use as a
reference because the problem is not time-dependent.

An alternative approach is to control a point on the robot that does
not have a nonholonomic constraint. Instead of controlling the center
point of a fixed axle, \cite{Olfati-Saber} showed that a point $\epsilon$
in front could be controlled as if it were unconstrained, allowing
for partial feedback linearization. The inputs of the unconstrained
control can then be mapped algebraically to the control inputs of
the vehicle, and the reference trajectory is tracked with a steady-state
error of $\epsilon$. This control technique is referred to as $\epsilon$-tracking,
and the theoretical limit for $\epsilon$ is solely that it remains
positive, allowing the reference tracking error to become arbitrary
small \cite{Olfati-Saber}. In practice, however, very small values
of $\epsilon$ can become problematic due to noise introduced by dynamics,
sensors, or even numeric integration. Therefore, the first major contribution
of this paper is to generate a new trajectory that, when followed
using $\epsilon$-tracking, results in zero-error, asymptotic tracking
of the vehicle control point to the original reference trajectory.
The new tracking law is referred to as zero-error $\epsilon$-trajectory
tracking.

Another common limitation for mobile vehicles that must be taken into
account when developing trajectories for vehicles is the executability
of a trajectory. The maximum curvature limitation for path planning
is commonly addressed with the Dubins path, which is formed using
a waypoint path planner that produces optimal distance paths between
two oriented points using circular arcs and straight lines while taking
into account maximum curvature constraints \cite{Dubins1957}. Dubins
paths, however, require instantaneous changes in curvature which is
often not achievable in physical systems due to actuator limitations
and wheel slippage \cite{Meidenbauer2007}. Fraichard and Scheuer
proposed adapting the Dubins path by connecting waypoints using straight
lines, circular arcs, and transition arcs to account for the maximum
change of curvature \cite{Fraichard2004}, which are referred to as
continuous curvature paths (CCPaths). The addition of the transition
arc maintains curvature continuity between circles and lines. The
second contribution of this research is to convert CCPaths paths into
trajectories with corresponding feed-forward terms which can be used
for control. The result of these contribution is a path planner that
connects waypoints to form a time-dependent trajectory that can be
tracked by vehicles with first-order nonholonomic constraints with
guaranteed asymptotic convergence.

The remainder of this paper begins in Section \ref{sec:Background}
by outlining the necessary background for the development of the proposed
zero-error $\epsilon$-trajectory tracking. Section \ref{sec:Zero-Error-Epsilon-Tracking}
presents the main result of the paper: error-free trajectory tracking
accomplished by redefining an arbitrary reference trajectory in terms
of the $\epsilon$-point. Section \ref{sec:CCPath-Tracking} presents
an example application by extending continuous curvature paths to
trajectories and using the $\epsilon$-trajectory tracking method
to follow a desired trajectory. The paper ends with concluding remarks
in Section \ref{sec:Conclusion}.

\section{Preliminaries\label{sec:Background}}

This section presents the preliminary information necessary to understand
the contributions of this work. The motion models used for vehicle
control and analysis are first developed, followed by an introduction
to the $\epsilon$-point control. This section ends with a discussion
of the CCPaths.

\subsection{Kinematic Models\label{subsec:Kinematic-Models}}

Four kinematic models are presented in this section. The first two
describe the vehicle motion models being controlled, the third describes
a model used to create trajectories, and the last is a model used
solely for convergence analysis. The simplest model to be employed
is the unicycle model. It is a model of a first-order nonholonomic
constrained system that can be represented as 
\begin{align}
\left[\begin{array}{c}
\dot{x}\\
\dot{y}\\
\dot{\psi}\\
\dot{v}\\
\dot{\omega}
\end{array}\right] & =\left[\begin{array}{c}
v\cos\psi\\
v\sin\psi\\
\omega\\
0\\
0
\end{array}\right]+\left[\begin{array}{cc}
0 & 0\\
0 & 0\\
0 & 0\\
1 & 0\\
0 & 1
\end{array}\right]\left[\begin{array}{c}
a\\
\alpha
\end{array}\right],\label{eq:unicycle_dynamics}
\end{align}
where $\left(x,y\right)$ is the position of the robot, $\psi$ is
the heading, $v$ is the translational velocity, $\omega$ is the
rotational velocity, and $a$ and $\alpha$ are the longitudinal and
angular acceleration, respectively, e.g. \cite{LaValle2006}. The
positions, velocities, and accelerations can be grouped together as
$\underbar{x}=\left[x,y\right]^{T}$, $\underbar{v}=\left[v,\omega\right]^{T}$,
and $\underbar{a}=\left[a,\alpha\right]^{T}$. The first two rows
in (\ref{eq:unicycle_dynamics}) describe the lateral motion constraint
of a wheeled vehicle. As a result, more complex kinematic models can
be related to (\ref{eq:unicycle_dynamics}) using algebraic mappings
\cite{LaValle2006}.

One such model that will be used is the Ackermann-style bicycle model
\cite{siegwart2011}. It represents the lateral motion constraint
as in (\ref{eq:unicycle_dynamics}) with the angular velocity being
a function of the steering angle and longitudinal velocity to present
a more realistic input for steered vehicles. It can be represented
as 
\begin{align}
\left[\begin{array}{c}
\dot{x}\\
\dot{y}\\
\dot{\psi}\\
\dot{v}\\
\dot{\phi}
\end{array}\right] & =\left[\begin{array}{c}
v\cos\psi\\
v\sin\psi\\
\frac{v}{L}\tan\phi\\
0\\
0
\end{array}\right]+\left[\begin{array}{cc}
0 & 0\\
0 & 0\\
0 & 0\\
1 & 0\\
0 & 1
\end{array}\right]\left[\begin{array}{c}
a\\
\xi
\end{array}\right],\label{eq:ackermann_dynamic_state_equations}
\end{align}
where $\phi$ is the steering angle, $L$ is the wheelbase, and $\xi$
is the steering rate, as seen in Figure \ref{fig:trailer_dynamics}.
This model is used in many kinematic controllers because $\xi$ can
be directly related to a steering wheel input.

The extended Dubins model, which was initially introduced in \cite{boissonnat1994}
to explore the distance optimal path for trajectories in $\mathbb{C}^{2}$,
where $\mathbb{C}^{k}$ denotes the set of functions which are $k$
times continuously differentiable, proves useful for creating trajectories
which respect the curvature constraints of the system. It can again
be seen as an extension to (\ref{eq:unicycle_dynamics}) where the
angular velocity is calculated in terms of the curvature as follows
\begin{align}
\left[\begin{array}{c}
\dot{x}\\
\dot{y}\\
\dot{\psi}\\
\dot{\kappa}
\end{array}\right] & =\left[\begin{array}{c}
v\cdot\cos\psi\\
v\cdot\sin\psi\\
v\cdot\kappa\\
0
\end{array}\right]+\left[\begin{array}{c}
0\\
0\\
0\\
1
\end{array}\right]\sigma,\label{eq:extended_dubins}
\end{align}
where $\kappa$ is curvature and $\sigma$ is the change in curvature.
There are no dynamics for $v$ as it is assumed constant.

To prove convergence of the $\epsilon$-tracking algorithm, a trailer
model will be used. The trailer model can be viewed as an extension
to (\ref{eq:unicycle_dynamics}) where a trailer is connected via
a hitch to the rear axle, as depicted in Figure \ref{fig:trailer_dynamics}.
The trailer model is given by 
\begin{align}
\left[\begin{array}{c}
\dot{x}\\
\dot{y}\\
\dot{\psi}\\
\dot{\psi}_{t}\\
\dot{v}\\
\dot{\omega}
\end{array}\right] & =\left[\begin{array}{c}
v\cos\psi\\
v\sin\psi\\
\omega\\
\frac{v}{d}\sin\left(\psi-\psi_{t}\right)\\
0\\
0
\end{array}\right]+\left[\begin{array}{cc}
0 & 0\\
0 & 0\\
0 & 0\\
0 & 0\\
1 & 0\\
0 & 1
\end{array}\right]\left[\begin{array}{c}
a\\
\alpha
\end{array}\right]\label{eq:trailer_equations}
\end{align}
where $\psi_{t}$ is the heading of the trailer, and $d$ is the length
of the hitch (e.g. \cite{Murray1993}). Note that the position of
the trailer can be directly computed from $\underbar{x}$, $d$, and
$\psi_{t}$ and is therefore not included as a state in (\ref{eq:trailer_equations}).

\subsection{Tracking Methods\label{subsec:Tracking-Methods}}

With the kinematic models to be controlled in hand, our attention
now turns to the point-control laws that will be extended to provide
exact tracking of a desired trajectory. A myriad of control approaches
have been used for tracking a desired path such as utilizing approximate
linearization \cite{DeLuca1998}, sinusoid control inputs \cite{Murray1993},
vector field following \cite{Kapitanyuk2018}, or other nonlinear
control laws \cite{Hoffmann2007,Snider2009,Soltesz2008} to name a
few.

This section focuses on the approach developed by Olfati-Saber as
it provides a framework to directly deal with the nonholonomic constraints
in wheeled vehicles by focusing on a point directly in front of the
vehicle, a point referred to as the $\epsilon$-point \cite{Olfati-Saber}.
This is, of course, not the only method that has used a point in front
of the vehicle for the basis of control. For example, the pure-pursuit
method \cite{Amidi1990,Coulter1992} uses a look-ahead point to connect
the vehicle to the desired path using a proportional curvature controller.
As the look-ahead point increases, the robot experiences greater stability
but results in steady-state errors and requires tuning for different
speeds \cite{Park2014}. The $\epsilon$-point tracking introduced
by Olfati-Saber has a similar steady-state error trend without the
need for extensive tuning due to the global exponential stability
guarantees.

In what follows, a simplified version of the control presented by
Olfati-Saber is presented that allows for partial feedback linearization.
The results of this section can be derived from the results in \cite{Olfati-Saber}
and are presented for the sake of clarity in understanding both the
notation and contributions of future sections. The simplification
of this section with respect to \cite{Olfati-Saber} is that $\epsilon$
is maintained constant whereas, in \cite{Olfati-Saber}, $\epsilon$
decreases to an arbitrarily small value. Theoretically, this arbitrary
smallness is beneficial for tracking a trajectory; practically, disturbances
pose issues for implementation.

Consider a point $\underbar{q}=\left[x,y\right]^{T}\in\mathbb{R}^{2}$
where $\ddot{\underbar{q}}$ is the control input to the system. Let
$\underbar{p}=\left[\underbar{q},\dot{\underbar{q}}\right]^{T}$.
The dynamics of the point can be written as 
\begin{align}
\dot{\underbar{p}}=\mathbf{A}\underbar{p}+\mathbf{B}\underbar{u}=\left[\begin{array}{cc}
\mathbf{0} & \mathbf{I}\\
\mathbf{0} & \mathbf{0}
\end{array}\right]\left[\begin{array}{c}
\underbar{q}\\
\dot{\underbar{q}}
\end{array}\right]+\left[\begin{array}{c}
\mathbf{0}\\
\mathbf{I}
\end{array}\right]\underbar{u}\label{eq:point-control-dynamics}
\end{align}
where $\mathbf{I}$ is the $2\times2$ identity matrix and $\mathbf{0}$
is a $2\times2$ matrix of zeros. Given the proposed system, the point
$\underbar{p}$ exponentially tracks a reference trajectory per the
following lemma.


    \begin{figure}[t]
        \centering
        \normalsize
        \def\svgwidth{.75\linewidth}
\begingroup%
  \makeatletter%
  \providecommand\color[2][]{%
    \errmessage{(Inkscape) Color is used for the text in Inkscape, but the package 'color.sty' is not loaded}%
    \renewcommand\color[2][]{}%
  }%
  \providecommand\transparent[1]{%
    \errmessage{(Inkscape) Transparency is used (non-zero) for the text in Inkscape, but the package 'transparent.sty' is not loaded}%
    \renewcommand\transparent[1]{}%
  }%
  \providecommand\rotatebox[2]{#2}%
  \newcommand*\fsize{\dimexpr\f@size pt\relax}%
  \newcommand*\lineheight[1]{\fontsize{\fsize}{#1\fsize}\selectfont}%
  \ifx\svgwidth\undefined%
    \setlength{\unitlength}{357.28610862bp}%
    \ifx\svgscale\undefined%
      \relax%
    \else%
      \setlength{\unitlength}{\unitlength * \real{\svgscale}}%
    \fi%
  \else%
    \setlength{\unitlength}{\svgwidth}%
  \fi%
  \global\let\svgwidth\undefined%
  \global\let\svgscale\undefined%
  \makeatother%
  \begin{picture}(1,0.53766596)%
    \lineheight{1}%
    \setlength\tabcolsep{0pt}%
    \put(0,0){\includegraphics[width=\unitlength,page=1]{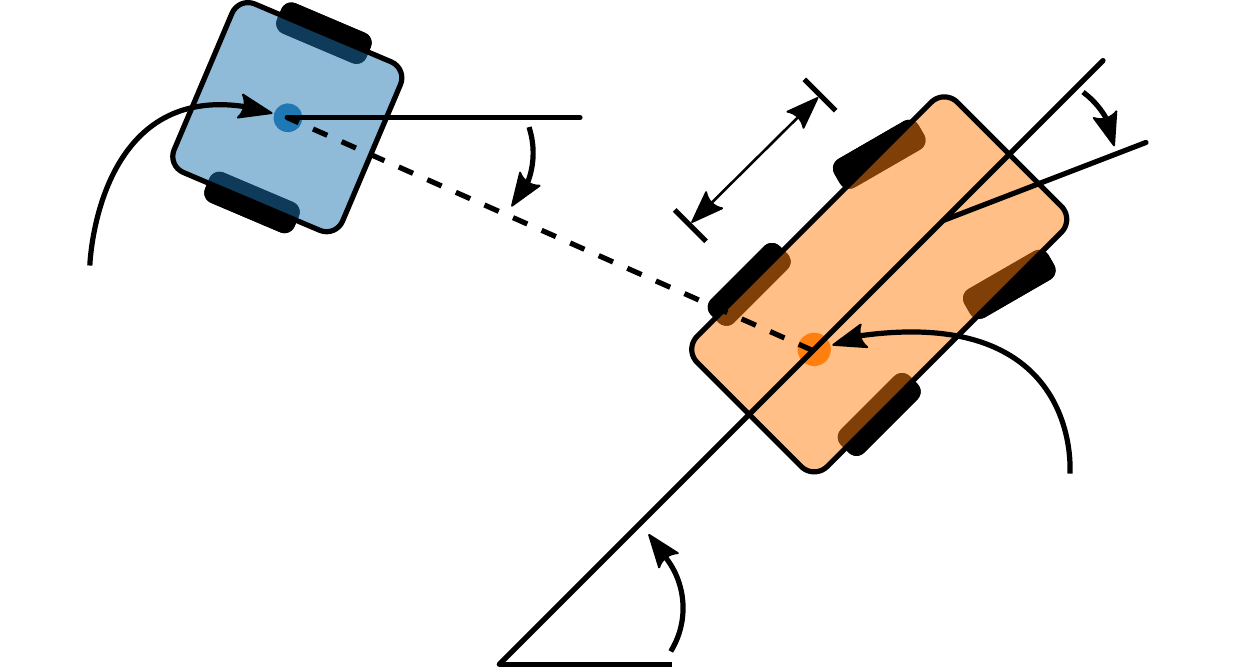}}%
    \put(0.5639253,0.04473527){\makebox(0,0)[lt]{\lineheight{917.5}\smash{\begin{tabular}[t]{l}$\psi$\end{tabular}}}}%
    \put(0.44327061,0.3968867){\makebox(0,0)[lt]{\lineheight{917.5}\smash{\begin{tabular}[t]{l}$\psi_t$\end{tabular}}}}%
    \put(0.42139581,0.30506183){\rotatebox{-23.91940168}{\makebox(0,0)[t]{\lineheight{917.5}\smash{\begin{tabular}[t]{c}$d$\end{tabular}}}}}%
    \put(0.91394272,0.45424152){\makebox(0,0)[lt]{\lineheight{917.5}\smash{\begin{tabular}[t]{l}$\phi$\end{tabular}}}}%
    \put(0.60041058,0.41889365){\rotatebox{44.48309659}{\makebox(0,0)[t]{\lineheight{917.5}\smash{\begin{tabular}[t]{c}L\end{tabular}}}}}%
    \put(0.07243188,0.2868581){\makebox(0,0)[t]{\lineheight{917.5}\smash{\begin{tabular}[t]{c}$(x_t,y_t)$\end{tabular}}}}%
    \put(0.86264139,0.12779219){\makebox(0,0)[t]{\lineheight{917.5}\smash{\begin{tabular}[t]{c}$(x,y)$\end{tabular}}}}%
  \end{picture}%
\endgroup%

        \caption{Dynamics of a car pulling a trailer where $L$ is the wheelbase, $\phi$ is the steering angle of the pulling vehicle, and $d$ is the length of the hitch between the connected trailer and the pulling vehicle. The angular velocity of the pulling vehicle can be computed using $\phi$ and $L$ where $\omega=\frac{v}{L}\tan\phi$.}
        \label{fig:trailer_dynamics}
    \end{figure}

\begin{lemma}
\label{lem:expo-point-control}Given the system defined in (\ref{eq:point-control-dynamics}),
let $\underbar{q}_{r}=\left[x_{r},y_{r}\right]^{T}\in\mathbb{C}^{2}$
, $\underbar{p}_{r}=\left[\underbar{q}_{r},\dot{\underbar{q}}_{r}\right]^{T}$
and $\underbar{u}=\ddot{\underbar{x}}_{r}-K\left(\underbar{p}-\underbar{p}_{r}\right)$,
with the real parts of the eigenvalues of $\mathbf{A}-\mathbf{B}K$
being negative, then $\underbar{q}\left(t\right)$ is globally exponentially
stable to $\underbar{x}_{r}\left(t\right)$.
\end{lemma}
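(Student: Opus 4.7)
The plan is to reduce the lemma to a standard linear time-invariant stability argument by forming the tracking error dynamics. Define the error vector $\underbar{e} = \underbar{p} - \underbar{p}_{r}$, which stacks the position and velocity errors of the controlled point relative to the reference. The goal is to show that $\underbar{e}(t)$ satisfies a closed-loop LTI equation whose system matrix is exactly $\mathbf{A} - \mathbf{B}K$, since then the hypothesis on its eigenvalues yields the result directly.

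First I would compute $\dot{\underbar{p}}_{r}$. Because $\underbar{p}_{r} = [\underbar{q}_{r}, \dot{\underbar{q}}_{r}]^{T}$ and $\underbar{q}_{r} \in \mathbb{C}^{2}$, one has $\dot{\underbar{p}}_{r} = [\dot{\underbar{q}}_{r}, \ddot{\underbar{q}}_{r}]^{T} = \mathbf{A}\underbar{p}_{r} + \mathbf{B}\ddot{\underbar{q}}_{r}$, which exactly says that the reference itself is a trajectory of the double integrator driven by the feed-forward input $\ddot{\underbar{q}}_{r}$. This identification is the small conceptual step that makes the rest mechanical, and is the reason the assumption $\underbar{q}_{r} \in \mathbb{C}^{2}$ appears. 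Next I would substitute the control law $\underbar{u} = \ddot{\underbar{q}}_{r} - K\underbar{e}$ into (\ref{eq:point-control-dynamics}) and subtract to obtain
\begin{equation*}
\dot{\underbar{e}} = \dot{\underbar{p}} - \dot{\underbar{p}}_{r} = \mathbf{A}\underbar{e} + \mathbf{B}(\ddot{\underbar{q}}_{r} - K\underbar{e}) - \mathbf{B}\ddot{\underbar{q}}_{r} = (\mathbf{A} - \mathbf{B}K)\underbar{e}.
\end{equation*}

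From here the solution is $\underbar{e}(t) = e^{(\mathbf{A}-\mathbf{B}K)t}\,\underbar{e}(0)$, and since every eigenvalue of $\mathbf{A}-\mathbf{B}K$ has strictly negative real part, standard LTI theory gives $\|\underbar{e}(t)\| \le c\,e^{-\lambda t}\|\underbar{e}(0)\|$ for constants $c,\lambda > 0$ independent of the initial condition. This establishes global exponential stability of the error, and hence in particular of the position component $\underbar{q}(t) - \underbar{q}_{r}(t)$, which is what the statement claims.

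There is essentially no obstacle: the entire proof is a one-line change of variables once one recognizes that the reference trajectory can be embedded as a trajectory of the same double integrator. The only minor notational care required is to interpret the $\ddot{\underbar{x}}_{r}$ in the statement of the control law as $\ddot{\underbar{q}}_{r}$, and to note that controllability of $(\mathbf{A},\mathbf{B})$ (which is obvious for the double integrator) guarantees that a suitable gain $K$ placing the eigenvalues in the open left half plane exists, so the hypothesis is non-vacuous.
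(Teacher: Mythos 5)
Your proposal is correct and follows essentially the same route as the paper: form the error state $\underbar{z}=\underbar{p}-\underbar{p}_{r}$, substitute the control law to obtain $\dot{\underbar{z}}=(\mathbf{A}-\mathbf{B}K)\underbar{z}$, and conclude global exponential stability from the eigenvalue hypothesis. If anything, your write-up is more complete than the paper's brief sketch, since you explicitly justify $\dot{\underbar{p}}_{r}=\mathbf{A}\underbar{p}_{r}+\mathbf{B}\ddot{\underbar{q}}_{r}$ (the step where $\underbar{q}_{r}\in\mathbb{C}^{2}$ is used) and correctly read $\ddot{\underbar{x}}_{r}$ in the statement as $\ddot{\underbar{q}}_{r}$.
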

\begin{proof}
Let an error state $\underbar{z}$ be introduced where $\underbar{z}=\underbar{p}-\underbar{p}_{r}$.
Defining $\underbar{u}=-K\underbar{z}+\ddot{\underbar{q}}_{r}$, the
time derivative of $\underbar{z}$ can be written as $\dot{\underbar{z}}=(A-BK)\underbar{z}$.
It can be verified that $(A,B)$ forms a completely controllable linear
time-invariant system which completes the proof.
\end{proof}

Lemma \ref{lem:expo-point-control} cannot be directly applied to
the unicycle model as the inputs and dynamics are obviously not matched.
However, $\underbar{q}_{\epsilon}$ can be defined as a point directly
in front of the vehicle such that 
\begin{align}
\underbar{q}_{\epsilon} & =\begin{bmatrix}x\\
y
\end{bmatrix}+\epsilon\begin{bmatrix}\cos\psi\\
\sin\psi
\end{bmatrix}\label{eq:control_ref_position}
\end{align}
where $\epsilon>0$ is constant. The point $\underbar{q}_{\epsilon}$
can then be defined so that it behaves as if it were the constraint-free
point in Lemma \ref{lem:expo-point-control} using an algebraic relationship
between $\underbar{u}$ in Lemma \ref{lem:expo-point-control} and
the control inputs in (\ref{eq:unicycle_dynamics}). This relationship
is stated in the following lemma.
\begin{lemma}
\label{lem:converge_unicycle}Let $\underbar{x}_{r}=\left[x_{r},y_{r}\right]^{T}\in\mathbb{C}^{2}$
be a reference trajectory for a vehicle with unicycle dynamics as
defined in (\ref{eq:unicycle_dynamics}). Let $\underbar{u}_{\epsilon}$
be defined as the control input to the point system from Lemma \ref{lem:expo-point-control}.
Define the system input as: \normalfont
\begin{align}
\underbar{a}=R_{\epsilon}^{-1}\underbar{u}_{\epsilon}-\hat{\omega}\underbar{v}.\label{eq:a_bar_control_input}
\end{align}
\itshape where 
\begin{align}
R_{\epsilon}^{-1} & =\left[\begin{array}{cc}
\cos\psi & -\frac{1}{\epsilon}\sin\psi\\
\sin\psi & \frac{1}{\epsilon}\cos\psi
\end{array}\right]
\end{align}
then $\underbar{q}_{\epsilon}\left(t\right)\rightarrow$ $\underbar{x}_{r}\left(t\right)$
globally and exponentially fast.
\end{lemma}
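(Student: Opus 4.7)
The plan is to show that the proposed control law exactly feedback-linearizes the dynamics of the $\epsilon$-point, so that $\ddot{\underbar{q}}_{\epsilon}=\underbar{u}_{\epsilon}$. Once this is established, the augmented state $\underbar{p}_{\epsilon}=[\underbar{q}_{\epsilon},\dot{\underbar{q}}_{\epsilon}]^{T}$ satisfies the double-integrator dynamics (\ref{eq:point-control-dynamics}), and choosing $\underbar{u}_{\epsilon}=\ddot{\underbar{x}}_{r}-K(\underbar{p}_{\epsilon}-\underbar{p}_{r})$ reduces the problem directly to Lemma~\ref{lem:expo-point-control}, giving global exponential convergence of $\underbar{q}_{\epsilon}(t)$ to $\underbar{x}_{r}(t)$.

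To carry this out, I would first differentiate (\ref{eq:control_ref_position}) and substitute the unicycle kinematics (\ref{eq:unicycle_dynamics}) for $\dot{x}$, $\dot{y}$, and $\dot{\psi}$. Grouping the resulting trigonometric expressions yields a clean factorization of the form $\dot{\underbar{q}}_{\epsilon}=R_{\epsilon}\underbar{v}$, where $R_{\epsilon}$ is the matrix whose inverse is the one stated in the lemma. Differentiating a second time and substituting $\dot{\underbar{v}}=\underbar{a}$ gives $\ddot{\underbar{q}}_{\epsilon}=R_{\epsilon}\underbar{a}+\dot{R}_{\epsilon}\underbar{v}$. The key algebraic observation is that the correction term can be written as $\dot{R}_{\epsilon}\underbar{v}=R_{\epsilon}\hat{\omega}\underbar{v}$ for the matrix $\hat{\omega}=R_{\epsilon}^{-1}\dot{R}_{\epsilon}$ appearing in (\ref{eq:a_bar_control_input}). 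Substituting the proposed control $\underbar{a}=R_{\epsilon}^{-1}\underbar{u}_{\epsilon}-\hat{\omega}\underbar{v}$ then cancels the correction term exactly, leaving $\ddot{\underbar{q}}_{\epsilon}=\underbar{u}_{\epsilon}$, which is what is needed to invoke Lemma~\ref{lem:expo-point-control}.

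Well-posedness of the control law rests on invertibility of $R_{\epsilon}$; a direct determinant computation gives $\det R_{\epsilon}=\epsilon$, so the standing hypothesis $\epsilon>0$ guarantees $R_{\epsilon}^{-1}$ exists uniformly in $\psi$. The main technical obstacle is therefore purely bookkeeping: verifying by a chain-rule expansion that the specific $\hat{\omega}$ implicit in (\ref{eq:a_bar_control_input}) is exactly $R_{\epsilon}^{-1}\dot{R}_{\epsilon}$, and that the trigonometric identities collapse the cross-terms $v\omega$ and $\epsilon\omega^{2}$ precisely against the contribution of $R_{\epsilon}\hat{\omega}\underbar{v}$. Once this identity is checked, the remainder of the proof is immediate: the error $\underbar{z}=\underbar{p}_{\epsilon}-\underbar{p}_{r}$ satisfies $\dot{\underbar{z}}=(\mathbf{A}-\mathbf{B}K)\underbar{z}$ with $\mathbf{A}-\mathbf{B}K$ Hurwitz by construction, so $\underbar{z}\to 0$ globally exponentially, which in particular forces $\underbar{q}_{\epsilon}(t)\to\underbar{x}_{r}(t)$ at an exponential rate.
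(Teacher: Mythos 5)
Your proposal is correct and takes essentially the same route as the paper's proof: differentiate $\underbar{q}_{\epsilon}$ twice to obtain $\ddot{\underbar{q}}_{\epsilon}=R_{\epsilon}\hat{\omega}\underbar{v}+R_{\epsilon}\underbar{a}$, solve the resulting algebraic relation for $\underbar{a}$ so that $\ddot{\underbar{q}}_{\epsilon}=\underbar{u}_{\epsilon}$, and then invoke Lemma~\ref{lem:expo-point-control} for global exponential convergence. Your explicit identification $\hat{\omega}=R_{\epsilon}^{-1}\dot{R}_{\epsilon}$ and the well-posedness check $\det R_{\epsilon}=\epsilon>0$ are details the paper leaves implicit, but they verify rather than alter the paper's argument.
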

\begin{proof}
The first and second derivatives for $\underbar{q}_{\epsilon}$ can
be directly calculated and written as: 
\begin{align}
\dot{\underbar{q}}_{\epsilon} & =R_{\epsilon}\underbar{v},\label{eq:derivative_control_input}\\
\ddot{\underbar{q}}_{\epsilon} & =R_{\epsilon}\hat{\omega}\underbar{v}+R_{\epsilon}\underbar{a}
\end{align}
\begin{align}
R_{\epsilon} & =\begin{bmatrix}\begin{array}{cc}
\cos\psi & -\epsilon\sin\psi\\
\sin\psi & \epsilon\cos\psi
\end{array}\end{bmatrix},\ \hat{\omega}=\left[\begin{array}{cc}
0 & -\epsilon\omega\\
\frac{\omega}{\epsilon} & 0
\end{array}\right].\label{eq:eps_matrix_definitions}
\end{align}
Therefore, by setting $\ddot{\underbar{q}}_{\epsilon}$ equal to $\underbar{u}_{\epsilon}$
and algebraically solving for $\underbar{a}$, the dynamics for $\underbar{q}_{\epsilon}\left(t\right)$
are matched to (\ref{eq:point-control-dynamics}), and $\underbar{q}_{\epsilon}\left(t\right)\rightarrow$
$\underbar{x}_{r}\left(t\right)$ globally and exponentially fast,
per Lemma \ref{lem:expo-point-control}.
\end{proof}

\begin{corollary}
Due to Lemma \ref{lem:converge_unicycle}, $\left\Vert \underbar{e}\left(t\right)\right\Vert \rightarrow\epsilon$
where $\underbar{e}=\underbar{x}-\underbar{x}_{r}$.
\end{corollary}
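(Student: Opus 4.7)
The plan is to rewrite $\underbar{e}$ using the definition of the $\epsilon$-point in (\ref{eq:control_ref_position}), and then let Lemma \ref{lem:converge_unicycle} do the heavy lifting.

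First, I would solve (\ref{eq:control_ref_position}) for $\underbar{x}$ and substitute into $\underbar{e}=\underbar{x}-\underbar{x}_{r}$ to obtain
\begin{equation*}
\underbar{e}(t) \;=\; \bigl(\underbar{q}_{\epsilon}(t)-\underbar{x}_{r}(t)\bigr)\;-\;\epsilon\begin{bmatrix}\cos\psi(t)\\ \sin\psi(t)\end{bmatrix}.
\end{equation*}
The second term has Euclidean norm exactly $\epsilon$ for all $t$ (independent of $\psi$), while the first term is precisely the tracking error of the $\epsilon$-point to the reference, which Lemma \ref{lem:converge_unicycle} guarantees decays to zero globally and exponentially fast.

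Second, I would apply the reverse triangle inequality to the displayed identity to get
\begin{equation*}
\bigl|\,\|\underbar{e}(t)\|-\epsilon\,\bigr| \;\le\; \|\underbar{q}_{\epsilon}(t)-\underbar{x}_{r}(t)\|.
\end{equation*}
Since the right-hand side tends to zero exponentially by Lemma \ref{lem:converge_unicycle}, we conclude $\|\underbar{e}(t)\|\to\epsilon$, which establishes the corollary. There is no serious obstacle here; the only subtle point worth flagging in the write-up is that the convergence is to the value $\epsilon$ rather than to zero, and that the direction of the residual error aligns with the vehicle heading $\psi$, so the limit is approached but the error vector itself need not settle to a fixed direction (its magnitude does). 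This last observation is precisely what motivates the zero-error reformulation pursued in Section \ref{sec:Zero-Error-Epsilon-Tracking}.
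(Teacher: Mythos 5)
Your proof is correct and takes essentially the same route as the paper, whose entire argument is the informal remark that $\underbar{q}_{\epsilon}(t)\rightarrow\underbar{x}_{r}(t)$ while $\underbar{x}$ is by definition at a fixed distance $\epsilon$ from $\underbar{q}_{\epsilon}$. Your decomposition $\underbar{e}=\left(\underbar{q}_{\epsilon}-\underbar{x}_{r}\right)-\epsilon\left[\cos\psi,\ \sin\psi\right]^{T}$ combined with the reverse triangle inequality is simply the rigorous formalization of that one-sentence observation.
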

This corollary can be seen by noting that the $\epsilon$-point converges
to $\underbar{x}_{r}$ and, by definition, $\underbar{x}$ is a distance
$\epsilon$ from the $\underbar{q}_{\epsilon}$.

Lemma \ref{lem:converge_unicycle} can be directly extended to the
bicycle model in (\ref{eq:ackermann_dynamic_state_equations}) by
forming one additional algebraic relationship which relates the steering
input of the bicycle to the angular acceleration input of the unicycle.
A small contribution to $\epsilon$-tracking is now made by extending
it to the bicycle model in (\ref{eq:ackermann_dynamic_state_equations}),
as shown in Figure \ref{fig:bicycle_control_point}. In the bicycle
model, the angular velocity is given by $\omega=\frac{v}{L}\tan\phi$.
By differentiating the angular velocity, the angular acceleration
can be written as: 
\begin{align}
\alpha & =\frac{a}{L}\cdot\tan\left(\phi\right)+\frac{v}{L}\cdot\frac{\xi}{\cos^{2}\left(\phi\right)}.\label{eq:acker_angular_accel}
\end{align}
Rearranging terms, the steering input can be expressed as: 
\begin{align}
\xi & =\frac{1}{v}\cos^{2}\left(\phi\right)\left(L\alpha-a\cdot\tan\left(\phi\right)\right)\text{.}\label{eq:bicycle-angular-accel}
\end{align}
Thus, given the inputs $\underbar{a}$ in Lemma \ref{lem:converge_unicycle},
the inputs to the bicycle, $(a,\xi)$, can be directly computed as
long as $v\neq0$.


    \begin{figure}[t]
        \centering
        \normalsize
        \def\svgwidth{.75\linewidth}
        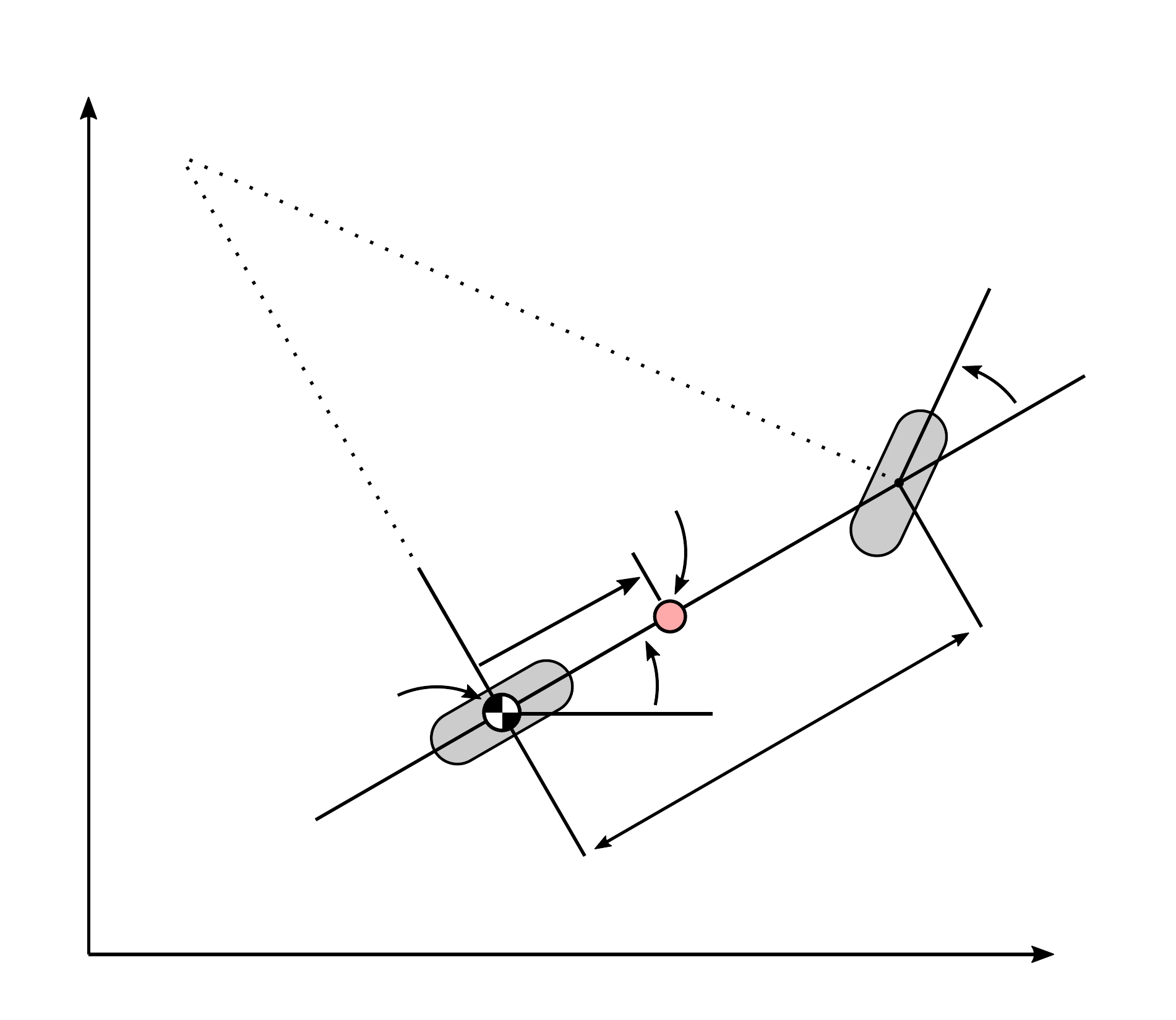
        \caption{Proposed control reference point $\underbar{q}_\epsilon$ used to linearize the bicycle model.}
        \label{fig:bicycle_control_point}
    \end{figure}

\subsection{Continuous Curvature Paths\label{subsec:Waypoint-Following-Clothoids}}

The CCPath was first introduced by Fraichard and Scheuer to plan a
path that accounts for the maximum curvature and maximum change in
curvature \cite{Fraichard2004}. This allows for planning paths that
consider fundamental vehicle execution constraints or even, as shown
in \cite{Villagra2012a}, to adhere to passenger acceleration and
jerk constraints. A CCPath is formed by connecting two oriented waypoints
using straight lines, circular arcs, and transition arcs as seen in
Figure \ref{fig:clothoid_connection}.

The transition arcs are formed by generating a Clothoid, also referred
to as the Euler Spiral, which is defined as a path whose curvature
increases linearly with arc length \cite{Levien2008}. Because the
Clothoid produces continuous curvature paths, it approximates the
kinematic curvature limitations imposed by physical actuators on mobile
robots \cite{Meidenbauer2007}. The parametric equations to form a
Clothoid are referred to as the Fresnel integral and are represented
as 
\begin{align}
\begin{split}\begin{aligned}x=C_{f}=x_{0}+\int_{0}^{s}\cos\left(\frac{1}{2}\sigma(\xi^{2}+\kappa_{0}\xi+\psi_{0})\right)d\xi\\
y=S_{f}=y_{0}+\int_{0}^{s}\sin\left(\frac{1}{2}\sigma(\xi^{2}+\kappa_{0}\xi+\psi_{0})\right)d\xi
\end{aligned}
\end{split}
\label{eq:fresnel_integrals}
\end{align}
where $\left(x,y\right)$ is the position, $\psi$ is the heading,
$\kappa$ is curvature, and $\sigma$ is the change in curvature \cite{Lekkas2014}.

The general case for a turn in a CCPath is referred to as a continuous
curvature turn (CCTurn) which is composed of three stages for both
positive and negative velocity: (1) a transition arc of linearly changing
curvature from 0 to $\pm\kappa_{\text{max}}$ at a rate of $\pm\sigma_{\text{max}}$,
where $\kappa_{\text{max}}$ is the maximum allowable curvature and
$\sigma_{\text{max}}$ is the maximum allowable change in curvature,
(2) a circular arc with a constant curvature of $\pm\kappa_{\text{max}}$,
and (3) a transition arc of linearly changing curvature from $\pm\kappa_{\text{max}}$
to $0$ at a rate of $\pm\sigma_{\text{max}}$ \cite{Fraichard2004}.

One difficulty of path-following controllers for CCPaths is determining
which point on the path to follow. This can be done by finding the
closest lateral point on the path, often referred to as the cross-track
error. Calculating cross-track error introduces new complexities including
how to optimally calculate or measure cross-track error, and how to
avoid errors originating from self-intersecting paths \cite{Kapitanyuk2018}.
Following a time-indexed path (referred to as a trajectory) largely
eliminates this difficulty by using a time-index to determine the
point to follow. Note the similarity of (\ref{eq:fresnel_integrals})
to the unicycle model in (\ref{eq:unicycle_dynamics}). The evolution
of $x$ depends on the cosine of some changing value and the evolution
of y likewise depends on the sine. This will be exploited in Section
\ref{sec:CCPath-Tracking} to create a natural time-indexing of a
CCPath based on the kinematic constraints of a vehicle.


    \begin{figure}[t]
        \centering
        \normalsize
        \def\svgwidth{.75\linewidth}
        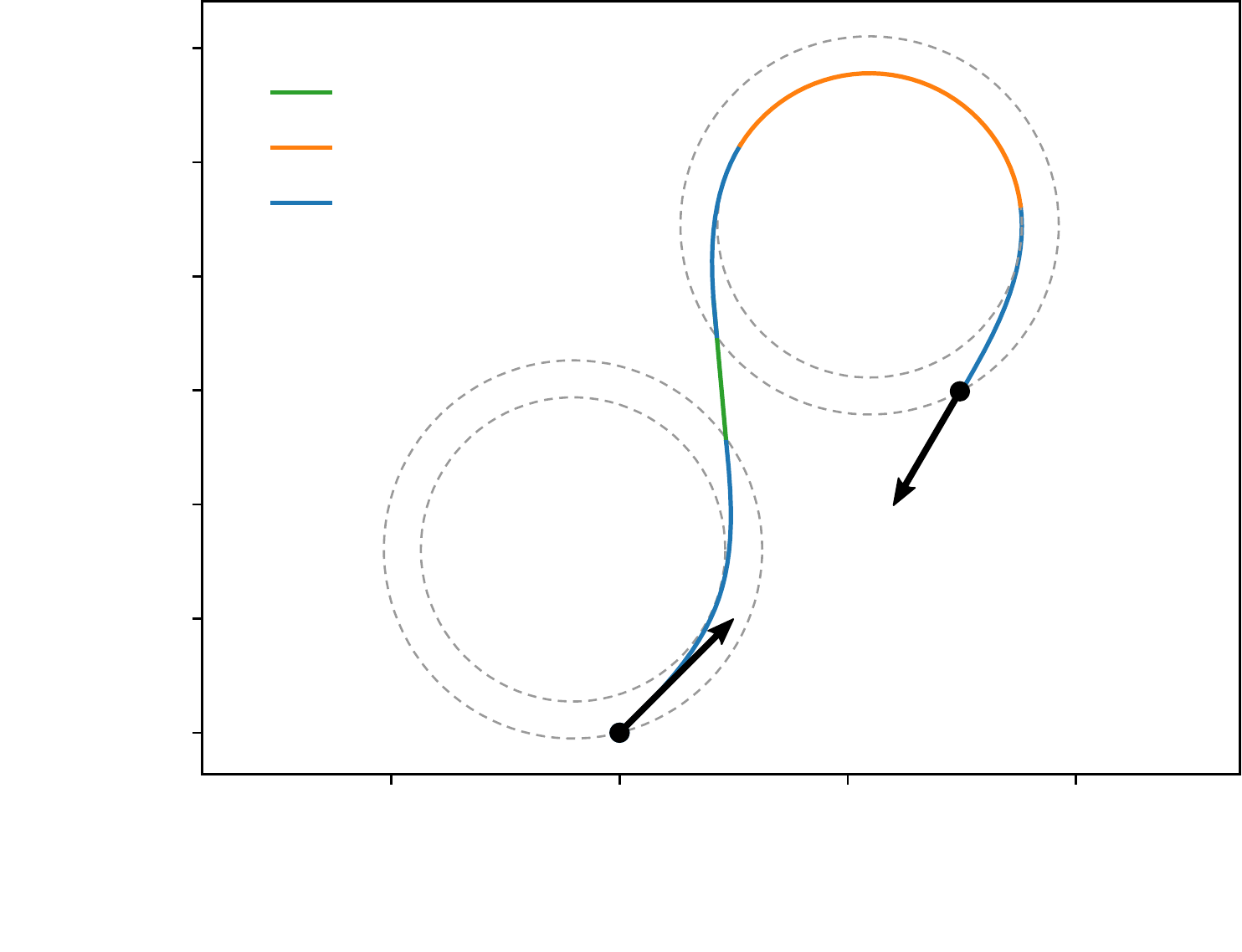
        \caption{Connecting directional waypoints $w_1$ and $w_2$ using straight lines, circular arcs, and transition arcs composed of Clothoids.}
        \label{fig:clothoid_connection}
    \end{figure}

\section{Zero-Error $\epsilon$-Trajectory Tracking\label{sec:Zero-Error-Epsilon-Tracking}}

This section uses the $\epsilon$-tracking method to achieve zero-error
$\epsilon$-trajectory tracking by creating a trajectory that, when
tracked by the $\epsilon$-point, results in perfect tracking of the
original trajectory by the robot. The formation of the new trajectory
is addressed in Section \ref{subsec:Epsilon-Path-Generation} and
Section \ref{subsec:Epsilon-Trajectory-Tracking-Stability} analyzes
the stability properties of tracking the new trajectory using $\epsilon$-tracking.

\subsection{Generating an $\epsilon$-Trajectory\label{subsec:Epsilon-Path-Generation}}

It was shown in Lemma \ref{lem:converge_unicycle} that $\epsilon$-tracking
produces a steady-state error equal to the length of $\epsilon$.
To eliminate the steady-state error, this section develops a method
to exploit the knowledge of the desired trajectory and vehicle dynamics
to generate a new trajectory that is intended for the control point
used in $\epsilon$-tracking. The new trajectory is referred to as
an $\epsilon$-trajectory because it is generated for the $\epsilon$-point.

Let $\underbar{x}_{r}=\left[x_{r},y_{r}\right]^{T}\in\mathbb{C}^{2}$
be a reference trajectory intended for execution by a unicycle in
(\ref{eq:unicycle_dynamics}). Appendix \ref{app:Calculating-System-States}
derives the states and inputs of the unicycle model to achieve perfect
tracking without disturbances such that: 
\begin{align}
\begin{split}\psi_{r} & =\text{atan2}\left(\dot{y}_{r},\dot{x}_{r}\right)\\
v_{r} & =\sqrt{\dot{x}_{r}^{2}+\dot{y}_{r}^{2}}\\
a_{r} & =\left(\dot{x}_{r}\ddot{x}_{r}+\dot{y}_{r}\ddot{y}_{r}\right)v_{r}^{-1}\\
\omega_{r} & =\left(\dot{x}_{r}\ddot{y}_{r}-\dot{y}_{r}\ddot{x}_{r}\right)v_{r}^{-2}\\
\alpha_{r} & =\left(\dot{x}_{r}\dddot{y}_{r}-\dot{y}_{r}\dddot{x}_{r}\right)v_{r}^{-2}-2a_{r}\omega_{r}v_{r}^{-1}
\end{split}
.\label{eq:reference_trajectory_flat_states}
\end{align}
Note that if the vehicle were to achieve perfect tracking, the $\epsilon$-point
would be located at a distance of $\epsilon$ in front of the vehicle,
as depicted in Figure \ref{fig:eps-traj-generation} (a). Figure \ref{fig:eps-traj-generation}
(a) shows a vehicle driving a cosine trajectory, and the $\epsilon$-trajectory
is depicted as the point leading the trajectory.


\begin{figure*}[t]
    \centering
    \tiny
    \begin{subfigure}{.5\textwidth}
        \centering
        \def\svgwidth{1\linewidth}
        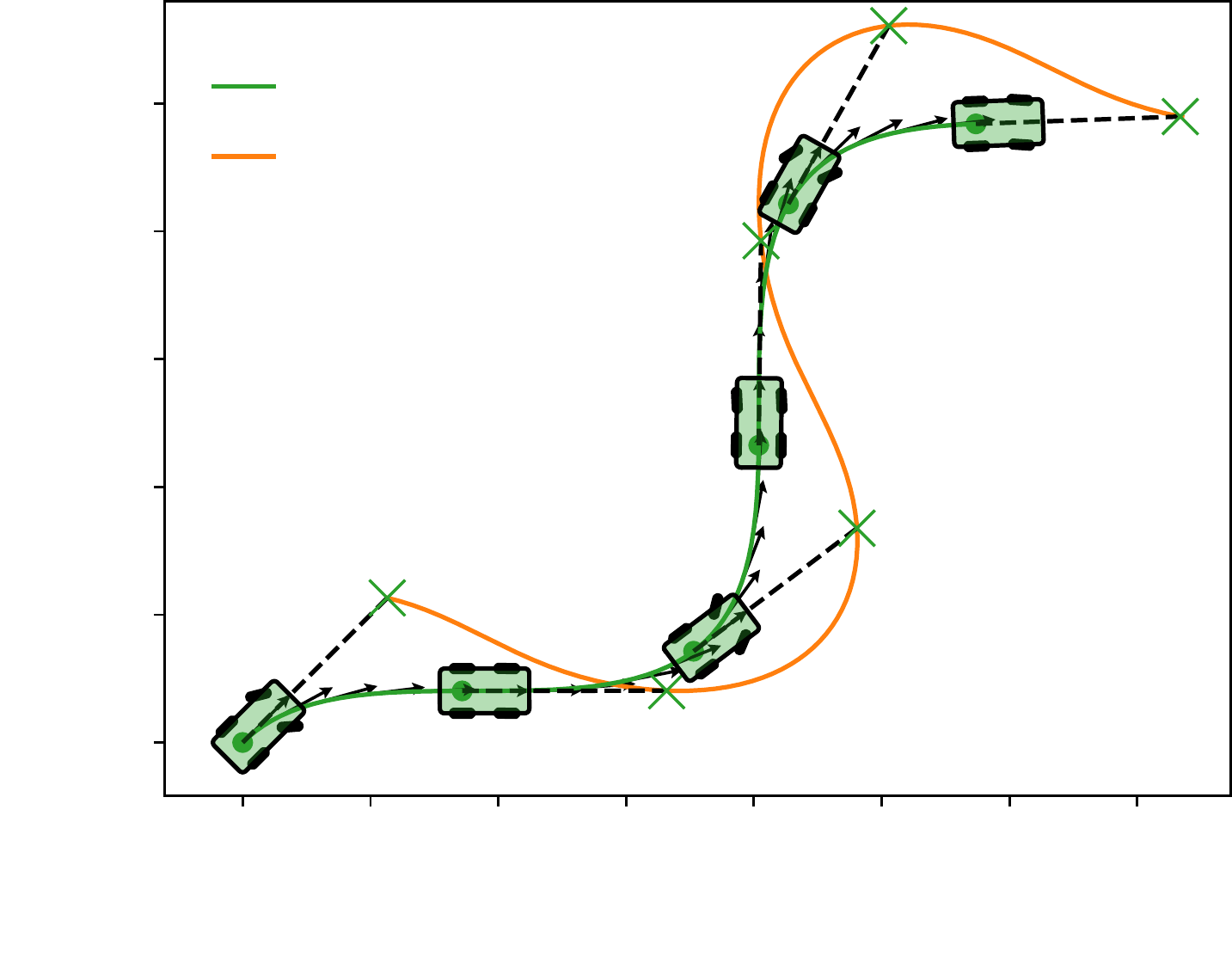
        \caption{}
    \end{subfigure}%
    \begin{subfigure}{.5\textwidth}
        \centering
        \def\svgwidth{1\linewidth}
        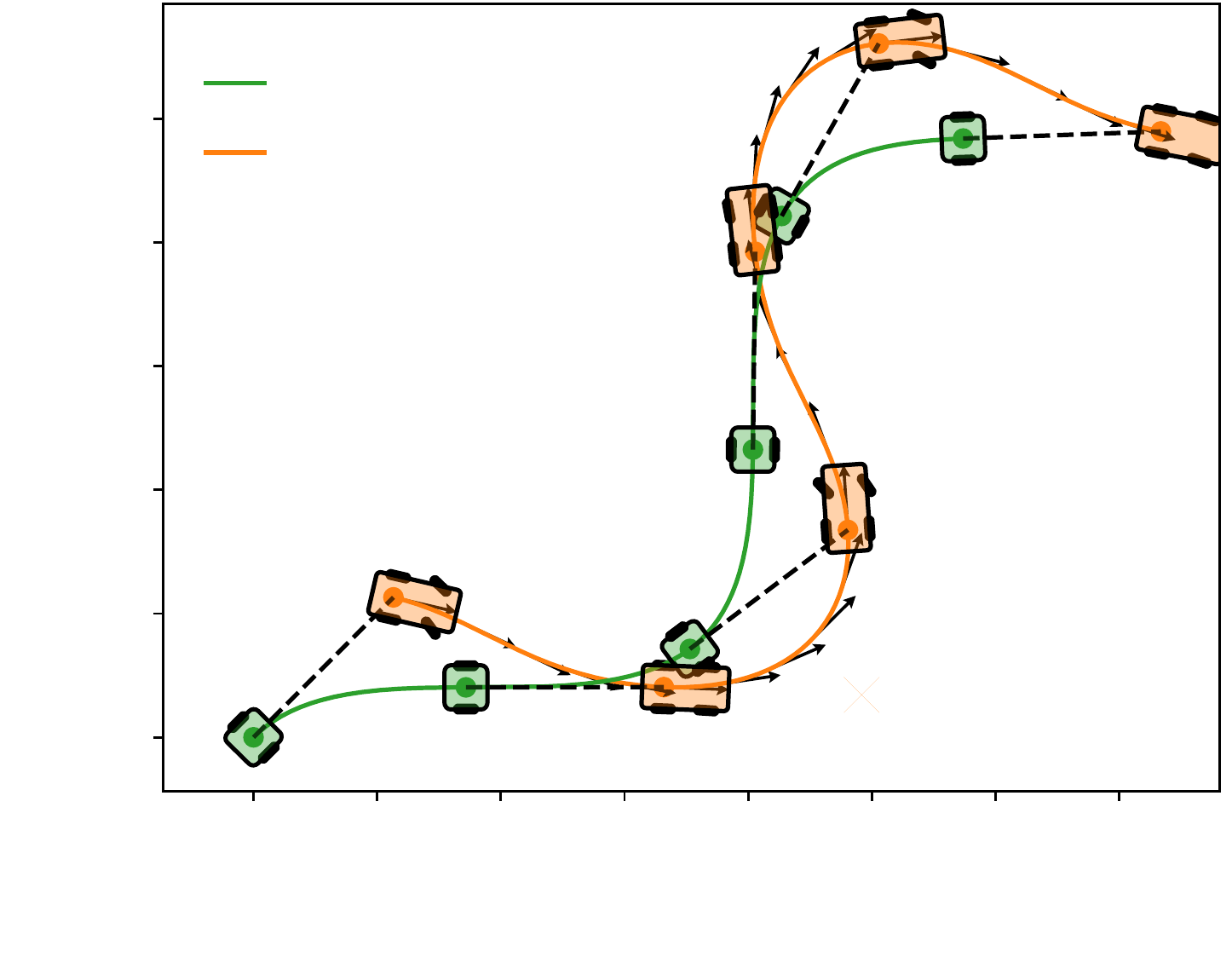
        \caption{}
    \end{subfigure}
    \caption[None]{(a) Depiction of the generation of an $\epsilon$-trajectory from the point $\underbar{q}_{\epsilon r}$, which leads the reference trajectory by a distance of $\epsilon$.
 (b) Representation of the vehicle reference point acting as a trailer being pulled by the $\epsilon$-trajectory.}
    \label{fig:eps-traj-generation}
\end{figure*}

	Thus, the $\epsilon$-trajectory, denoted $\underbar{q}_{\epsilon r}$,
is defined in terms of the reference position and orientation as:
\begin{align}
\underbar{q}_{\epsilon r}=\begin{bmatrix}x_{\epsilon r}\\
y_{\epsilon r}
\end{bmatrix} & =\begin{bmatrix}x_{r}\\
y_{r}
\end{bmatrix}+\epsilon\begin{bmatrix}\cos\psi_{r}\\
\sin\psi_{r}
\end{bmatrix}.\label{eq:eps-trajectory}
\end{align}
Subsequent derivatives can then be taken of $\underbar{q}_{\epsilon r}$
so that the controller presented in the sequel can track $\underbar{q}_{\epsilon r}$.
The first and second derivatives of $\underbar{q}_{\epsilon r}$ can
be stated as 
\begin{equation}
\begin{split}\dot{\underbar{q}}_{\epsilon r} & =R_{\epsilon r}\underbar{v}_{r}\\
\ddot{\underbar{q}}_{\epsilon r} & =R_{\epsilon r}\hat{\omega}_{r}\underbar{v}_{r}+R_{\epsilon r}\underbar{a}_{r}
\end{split}
\label{eq:reference_trajectory}
\end{equation}
where matrices are defined equivalently to (\ref{eq:eps_matrix_definitions}).
Though not necessary for trajectory tracking, the relationship between
the $\epsilon$-trajectory and the reference trajectory can be viewed
as a vehicle pulling a trailer instead of a point leading a vehicle,
as depicted in Figure \ref{fig:eps-traj-generation} (b). Figure \ref{fig:eps-traj-generation}
(b) shows that the $\epsilon$-trajectory can be represented as a
vehicle with a heading and velocities calculated similar to (\ref{eq:reference_trajectory_flat_states})
and written as: 
\begin{align}
\begin{split}\psi_{\epsilon r} & =\text{atan2}\left(\dot{y}_{\epsilon r},\dot{x}_{\epsilon r}\right)\\
v_{\epsilon r} & =\sqrt{\dot{x}_{\epsilon r}^{2}+\dot{y}_{\epsilon r}^{2}}\\
\omega_{\epsilon r} & =\left(\dot{x}_{\epsilon r}\ddot{y}_{\epsilon r}-\dot{y}_{\epsilon r}\ddot{x}_{\epsilon r}\right)v_{\epsilon r}^{-2}
\end{split}
\label{eq:y_eps_r_heading_vel_omega_r}
\end{align}
This relationship is significant because it is used to prove convergence
in the following section, and it provides additional intuition as
to how the length of $\epsilon$ and velocity of the trajectory affect
the speed of convergence.

\subsection{Stability of Zero-Error $\epsilon$-Trajectory Tracking\label{subsec:Epsilon-Trajectory-Tracking-Stability}}

With a modified trajectory for the $\epsilon$-point in hand, this
section presents a theorem and proof showing that using the $\epsilon$-tracking
controller to follow the $\epsilon$-trajectory will cause the vehicle
reference point to converge to the original reference trajectory.
The proof of the result can intuitively be thought of in two parts:
a driving phase and a pulling phase, as depicted in Figure \ref{fig:proof_stability_explanation}
(a). In the driving phase, the $\epsilon$-point converges to the
$\epsilon$-trajectory according to the gains of the control law proposed
in Lemma \ref{lem:expo-point-control}. The driving phase is depicted
in Figure \ref{fig:proof_stability_explanation} (a) as the $\epsilon$-point
aggressively connects to the $\epsilon$-trajectory, wherein the pulling
phase is entered\footnote{Note that there is no physical switching between the two phases, as
both phases occur simultaneously to produce asymptotic convergence}. In the pulling phase, the vehicle reference point and the reference
trajectory follow the $\epsilon$-trajectory at a fixed distance of
$\epsilon$, and therefore behave as two trailers being pulled by
a single vehicle with a hitch at $\underbar{q}_{\epsilon r}$. This
relationship is depicted in Figure \ref{fig:proof_stability_explanation}
(b) where the vehicle reference point and reference trajectory are
shown as two trailers being pulled by the $\epsilon$-trajectory.

The movement of the ``trailers'' are analyzed, and it is shown that
the vehicle asymptotically converges to the desired trajectory. These
phases are mirrored in the proof where Lemma \ref{lem:converge_unicycle}
is used to show the convergence of the $\epsilon$-point, and LaSalle's
invariance principle \cite{lasalle1961} is used to evaluate the invariant
set consisting of the allowable states once the driving phase has
``completed''.


\begin{figure*}[t]
    \centering
    \scriptsize
    \begin{subfigure}{.5\textwidth}
        \centering
        \def\svgwidth{1\linewidth}
        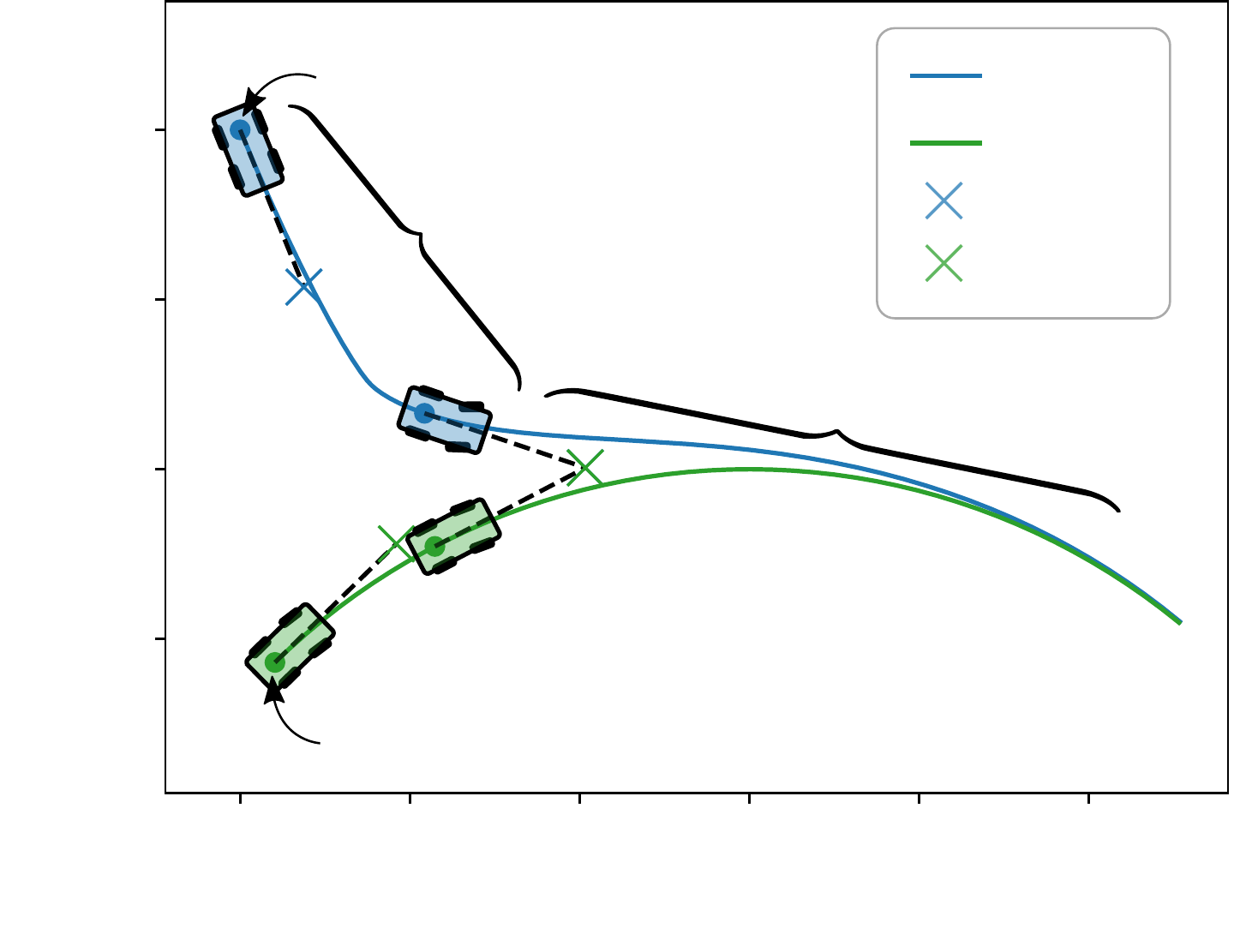
        \caption{}
    \end{subfigure}%
    \begin{subfigure}{.5\textwidth}
        \centering
        \def\svgwidth{1\linewidth}
        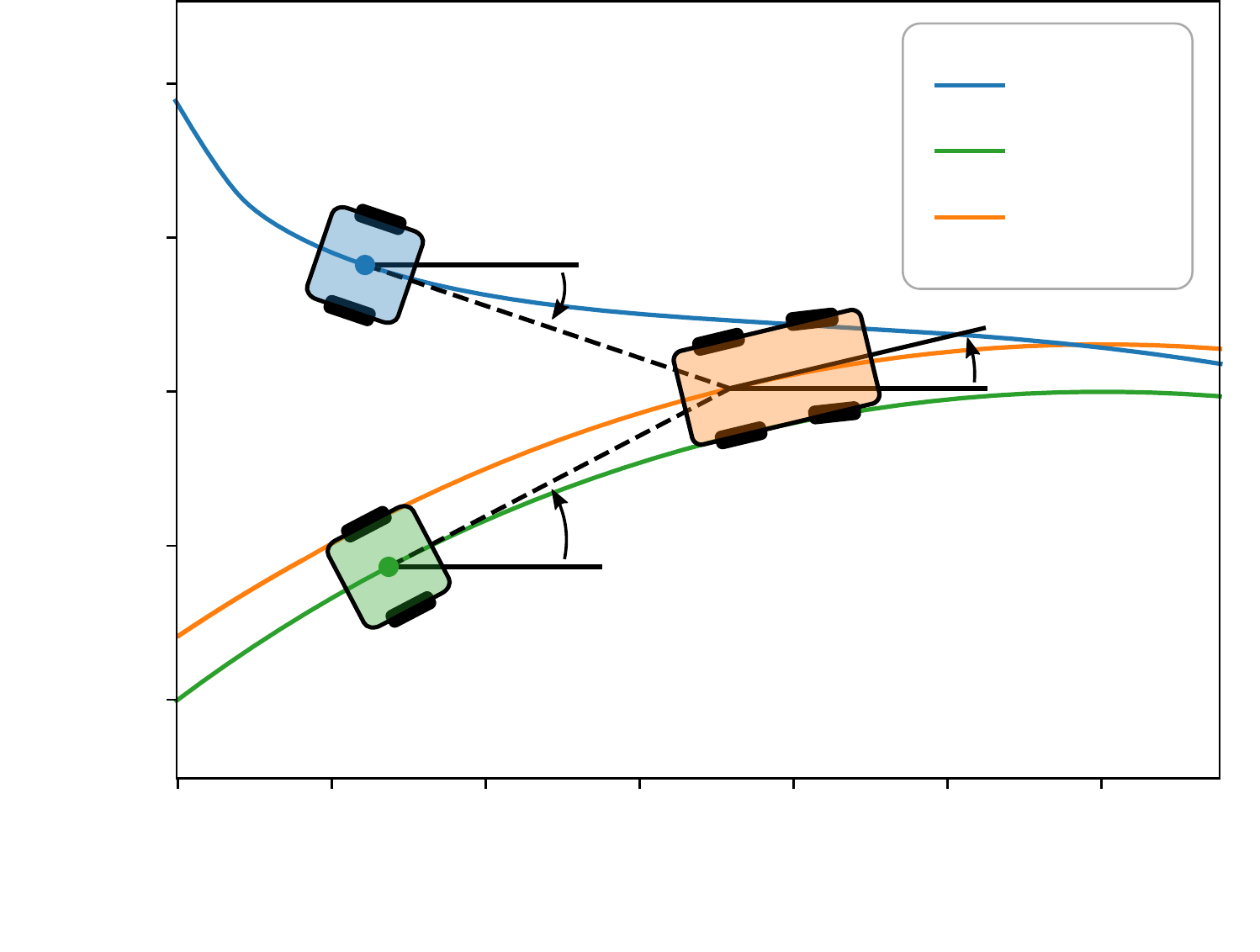
        \caption{}
    \end{subfigure}
    \caption[None]{(a) Representation of the driving and pulling phases, where the blue and green vehicles represent discrete time intervals of $\underbar{x}$ and $\underbar{x}_{r}$ respectively. The driving phase dominates as $\underbar{q}_{\epsilon}$ converges exponentially fast to $\underbar{q}_{\epsilon r}$ as shown in Lemma \ref{lem:converge_unicycle}, and the pulling phase dominates as $\underbar{q}_{\epsilon}(t) \rightarrow \underbar{q}_{\epsilon r}(t)$.
 (b) Zoomed in illustration of Figure \ref{fig:proof_stability_explanation} a during the pulling phase, which depicts the control system as two trailers being pulled by the $\epsilon$-trajectory. As $\underbar{q}_{\epsilon}(t) \rightarrow \underbar{q}_{\epsilon r}(t)$ each system follows the $\epsilon$-trajectory by a fixed value of $\epsilon$ just as trailers are pulled behind a vehicle by a hitch.}
    \label{fig:proof_stability_explanation}
\end{figure*}

The dynamic equations for a vehicle pulling two trailers at a distance
of $\epsilon$ can be expressed using (\ref{eq:trailer_equations})
by adding an additional trailer heading and substituting $\epsilon$
for $d$. Expressed in terms of the $\epsilon$-trajectory, this can
be described as: 
\begin{align}
\left[\begin{array}{c}
\dot{x}_{\epsilon r}\\
\dot{y}_{\epsilon r}\\
\dot{\psi}_{\epsilon r}\\
\dot{\psi}\\
\dot{\psi}_{r}\\
\dot{v}_{\epsilon}\\
\dot{\omega}_{\epsilon}
\end{array}\right] & =\left[\begin{array}{c}
v_{\epsilon r}\cos\left(\psi_{\epsilon r}\right)\\
v_{\epsilon r}\sin\left(\psi_{\epsilon r}\right)\\
\omega_{\epsilon}\\
\frac{v_{\epsilon r}}{\epsilon}\sin\left(\psi_{\epsilon r}-\psi\right)\\
\frac{v_{\epsilon r}}{\epsilon}\sin\left(\psi_{\epsilon r}-\psi_{r}\right)\\
0\\
0
\end{array}\right]+\left[\begin{array}{cc}
0 & 0\\
0 & 0\\
0 & 0\\
0 & 0\\
0 & 0\\
1 & 0\\
0 & 1
\end{array}\right]\left[\begin{array}{c}
a_{\epsilon}\\
\alpha_{\epsilon}
\end{array}\right]\label{eq:trailer_dynamics}
\end{align}
where the definitions for the variables in (\ref{eq:trailer_dynamics})
are listed in Table \ref{tab:paper_variables} and the two-trailer
relationship is depicted in Figure \ref{fig:proof_stability_explanation}
(b). The main result for convergence is presented in the following
theorem:	
\begin{theorem}
\label{thm:eps-path-tracking}Let $\underbar{x}_{r}$ be a reference
trajectory defined for $t\geq0$, then $\underbar{x}(t)$ will asymptotically
converge to $\underbar{x}_{r}(t)$ under the following assumptions:

\begin{enumerate}[label=\roman*)] 
	\item $\underbar{x}_{r}\in\mathbb{C}^{2}$ \item $v_r > 0$ as calculated in (\ref{eq:reference_trajectory_flat_states}) 
	\item $q_{\epsilon r}$ is defined as in (\ref{eq:reference_trajectory_flat_states}) and (\ref{eq:reference_trajectory}) 
	\item The vehicle dynamics are algebraically mapped to the unicycle model
	\item The control law from Lemma \ref{lem:converge_unicycle} is applied using $q_{\epsilon r}$ as a reference trajectory 
\end{enumerate} 

\tableTop{t}{{ l l  }
\toprule 
Variable & Description \\
\toprule 
$\underbar{x}$	  				&   Vehicle position \\
$\underbar{q}_{\epsilon}$	       &   $\epsilon$-control point in front of $\underbar{x}$ \\
$\underbar{x}_{r}$	  			&   Reference trajectory position \\
$\underbar{q}_{\epsilon r}$	     &   $\epsilon$-trajectory point in front of $\underbar{x}_{r}$ \\
$\epsilon$ 						 &   Distance from vehicle to control-reference point\\
$\psi$ 				  		   &   Heading of vehicle \\
$\psi_{r}$ 				  	   &   Heading of reference trajectory \\
$\psi_{\epsilon r}$ 				&   Heading of $\epsilon$-trajectory \\
$v_{\epsilon r}$					&   Longitudinal velocity of $\epsilon$-trajectory \\
$\omega_{\epsilon r}$ 			  &   Angular velocity of the $\epsilon$-trajectory \\
$a_{\epsilon r}$					&   Longitudinal acceleration of $\epsilon$-trajectory \\ 
$\alpha_{\epsilon r}$			   &   Anglar acceleration of $\epsilon$-trajectory \\ 

\bottomrule 
}
{\caption{Descriptions for frequently used variables.}\label{tab:paper_variables}} 	
\end{theorem}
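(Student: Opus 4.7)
The plan is to follow the two-phase intuition sketched after Figure~\ref{fig:proof_stability_explanation}: first handle the \emph{driving} phase by directly invoking Lemma~\ref{lem:converge_unicycle}, then analyze the residual \emph{pulling} dynamics on the invariant set $\{\underbar{q}_{\epsilon}=\underbar{q}_{\epsilon r}\}$ with a Lyapunov/LaSalle argument on the heading error $\psi-\psi_{r}$.

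For the driving phase, assumptions (i)--(ii) together with (\ref{eq:reference_trajectory_flat_states}) and (\ref{eq:eps-trajectory})--(\ref{eq:reference_trajectory}) guarantee $\underbar{q}_{\epsilon r}\in\mathbb{C}^{2}$: the strict positivity $v_{r}>0$ keeps the $\mathrm{atan2}$ in (\ref{eq:reference_trajectory_flat_states}) smooth, and a short computation shows $v_{\epsilon r}^{2}=v_{r}^{2}+\epsilon^{2}\omega_{r}^{2}>0$. Lemma~\ref{lem:converge_unicycle} then applies with reference $\underbar{q}_{\epsilon r}$, giving $\underbar{q}_{\epsilon}(t)\to\underbar{q}_{\epsilon r}(t)$ globally and exponentially, and also $\dot{\underbar{q}}_{\epsilon}(t)\to\dot{\underbar{q}}_{\epsilon r}(t)$. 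Inside the invariant set $\{\underbar{q}_{\epsilon}=\underbar{q}_{\epsilon r}\}$, the vehicle heading obeys the trailer equation appearing in (\ref{eq:trailer_dynamics}),
\[
\dot{\psi}=\frac{v_{\epsilon r}}{\epsilon}\sin(\psi_{\epsilon r}-\psi),
\]
and by construction $\psi_{r}$ satisfies the identical equation with $\psi_{r}$ in place of $\psi$, exactly as suggested by the two-trailer picture in Figure~\ref{fig:proof_stability_explanation}(b).

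For the pulling phase, I would choose $V=1-\cos(\psi-\psi_{r})$ and invoke the sum-to-product identity to obtain
\[
\dot{V}=-\frac{4v_{\epsilon r}}{\epsilon}\cos\!\Bigl(\psi_{\epsilon r}-\tfrac{\psi+\psi_{r}}{2}\Bigr)\cos\!\Bigl(\tfrac{\psi-\psi_{r}}{2}\Bigr)\sin^{2}\!\Bigl(\tfrac{\psi-\psi_{r}}{2}\Bigr),
\]
which is non-positive whenever $\psi$ and $\psi_{r}$ remain within $\pi/2$ of $\psi_{\epsilon r}$. A LaSalle-type argument then identifies $\{\psi=\psi_{r}\}$ as the largest invariant subset on which $\dot{V}=0$, yielding $\psi\to\psi_{r}$. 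Combining this with $\underbar{q}_{\epsilon}\to\underbar{q}_{\epsilon r}$ and the identity $\underbar{x}=\underbar{q}_{\epsilon}-\epsilon[\cos\psi,\sin\psi]^{T}$ (with the analogous relation for $\underbar{x}_{r}$) delivers the desired $\underbar{x}(t)\to\underbar{x}_{r}(t)$.

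The main obstacle I expect is that (\ref{eq:trailer_dynamics}) is non-autonomous through $v_{\epsilon r}(t)$ and $\psi_{\epsilon r}(t)$, so the classical LaSalle principle does not apply verbatim, and patching together the exponentially fast driving convergence with the potentially non-exponential pulling convergence requires care. I would address this via a Barbalat-type argument, using uniform boundedness of $\dot{\psi}_{\epsilon r}$ (from $\underbar{x}_{r}\in\mathbb{C}^{2}$) together with the uniform positivity of $v_{\epsilon r}$ established above, or alternatively by treating the residual $\underbar{q}_{\epsilon}-\underbar{q}_{\epsilon r}$ as an exponentially vanishing perturbation and invoking a standard cascade/ISS stability theorem on the pulling dynamics.
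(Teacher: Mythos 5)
Your proposal is correct and follows the same skeleton as the paper's proof: invoke Lemma~\ref{lem:converge_unicycle} for the driving phase, pass via LaSalle to the invariant set $\{\underbar{q}_{\epsilon}=\underbar{q}_{\epsilon r}\}$, and run a Lyapunov argument on the heading error using the two-trailer dynamics of (\ref{eq:trailer_dynamics}), finishing with $\underbar{x}=\underbar{q}_{\epsilon}-\epsilon[\cos\psi,\sin\psi]^{T}$. The differences lie in the pulling phase. The paper takes $V=\frac{1}{2}e^{2}$ with $e=\psi-\psi_{r}$ and shows $\dot{V}<0$ by a sign-comparison argument (Appendix~\ref{app:neg_definite_lyap}): monotonicity of sine on $[-\frac{\pi}{2},\frac{\pi}{2}]$ gives $\operatorname{sign}(e)=\operatorname{sign}\left(\sin(\psi_{\epsilon r}-\psi_{r})-\sin(\psi_{\epsilon r}-\psi)\right)$ once $\left|\psi_{\epsilon r}-\psi\right|<\frac{\pi}{2}$, where $\left|\psi_{\epsilon r}-\psi_{r}\right|<\frac{\pi}{2}$ holds automatically because $v_{r}>0$ keeps $\dot{\underbar{q}}_{\epsilon r}$ in the half-plane of $\psi_{r}$. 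You instead take $V=1-\cos(\psi-\psi_{r})$ and obtain an explicit factored $\dot{V}$ via sum-to-product (your formula checks out), and your region condition --- both $\psi$ and $\psi_{r}$ within $\frac{\pi}{2}$ of $\psi_{\epsilon r}$ --- reduces to the paper's, since the constraint on $\psi_{r}$ comes for free from the same half-plane computation underlying your observation $v_{\epsilon r}^{2}=v_{r}^{2}+\epsilon^{2}\omega_{r}^{2}>0$. The more substantive difference is your closing paragraph: the paper invokes LaSalle's invariance principle directly even though (\ref{eq:trailer_dynamics}) is non-autonomous through $v_{\epsilon r}(t)$ and $\psi_{\epsilon r}(t)$, and it does not discuss how the exponentially vanishing driving error interacts with the merely asymptotic pulling dynamics; you correctly flag both issues and propose a Barbalat or cascade/ISS repair, which would make the argument strictly more rigorous than the published one. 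Two caveats you share with the paper and should keep in mind if you execute the repair: assumption (ii) gives $v_{r}>0$ pointwise, not the uniform lower bound on $v_{\epsilon r}$ that Barbalat needs over an infinite horizon, and $\ddot{\underbar{q}}_{\epsilon r}$ involves third derivatives of $\underbar{x}_{r}$ (via $\alpha_{r}$ in (\ref{eq:reference_trajectory_flat_states})), so $\underbar{x}_{r}\in\mathbb{C}^{2}$ is borderline for your claim that $\underbar{q}_{\epsilon r}\in\mathbb{C}^{2}$.
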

\begin{proof}
Lemma \ref{lem:converge_unicycle} can be invoked to show that $\underbar{q}_{\epsilon}(t)\rightarrow\underbar{q}_{\epsilon r}(t)$
at an exponential rate. Due to LaSalle's invariance principle, we
need only evaluate the evolution of the state in the invariant set.
In the invariant set, 
\begin{equation}
\mathbf{0}=\underbar{q}_{\epsilon}-\underbar{q}_{\epsilon r}.\label{eq:q_epsilon_convergence}
\end{equation}
Defining $\underbar{e}=\underbar{x}_{r}-\underbar{x}$, (\ref{eq:q_epsilon_convergence})
can be re-arranged to obtain 
\begin{align}
\underbar{e} & =\begin{bmatrix}\epsilon\left(\cos\psi_{r}-\cos\psi\right)\\
\epsilon\left(\sin\psi_{r}-\sin\psi\right)
\end{bmatrix}.
\end{align}
To show that $\underbar{e}\left(t\right)\rightarrow0$ in the invariant
set, consider the state $e=\psi-\psi_{r}$. The time derivative of
$e$ can be written using (\ref{eq:trailer_dynamics}) as 
\begin{align}
\dot{e} & =-\frac{v_{\epsilon r}}{\epsilon}\left(\sin\left(\psi_{\epsilon r}-\psi_{r}\right)-\sin\left(\psi_{\epsilon r}-\psi\right)\right),
\end{align}
where $e\in\left[-\pi,\pi\right)$. Consider the Lyapunov candidate
function 
\begin{align}
V & =\frac{1}{2}e^{2}.
\end{align}
The derivative for $V$ can be expressed as
\begin{align}
\dot{V} & =e\dot{e}=-\frac{v_{\epsilon r}}{\epsilon}e\left(\sin\left(\psi_{\epsilon r}-\psi_{r}\right)-\sin\left(\psi_{\epsilon r}-\psi\right)\right).\label{eq:lyap_candidate_paper}
\end{align}
Appendix \ref{app:neg_definite_lyap} shows that
\begin{align}
\text{sign}\left(e\right) & =\text{sign}\left(\sin\left(\psi_{\epsilon r}-\psi_{r}\right)-\sin\left(\psi_{\epsilon r}-\psi\right)\right)\label{eq:sign_equality}
\end{align}
when $\left|\psi_{\epsilon r}-\psi\right|<\frac{\pi}{2}$. Since $v_{\epsilon r}$
and $\epsilon$ are strictly positive, this implies (\ref{eq:lyap_candidate_paper})
is negative definite when $\left|\psi_{\epsilon r}-\psi\right|<\frac{\pi}{2}$.
Therefore, $\psi\left(t\right)\rightarrow\psi_{r}\left(t\right)$
and $\underbar{e}\left(t\right)\rightarrow0$ asymptotically.
\end{proof}

\section{Waypoint Trajectory Planning and Following\label{sec:CCPath-Tracking}}


\begin{figure*}[t]
    \centering
    \tiny
    \begin{subfigure}{.5\textwidth}
        \centering
        \def\svgwidth{.65\linewidth}
        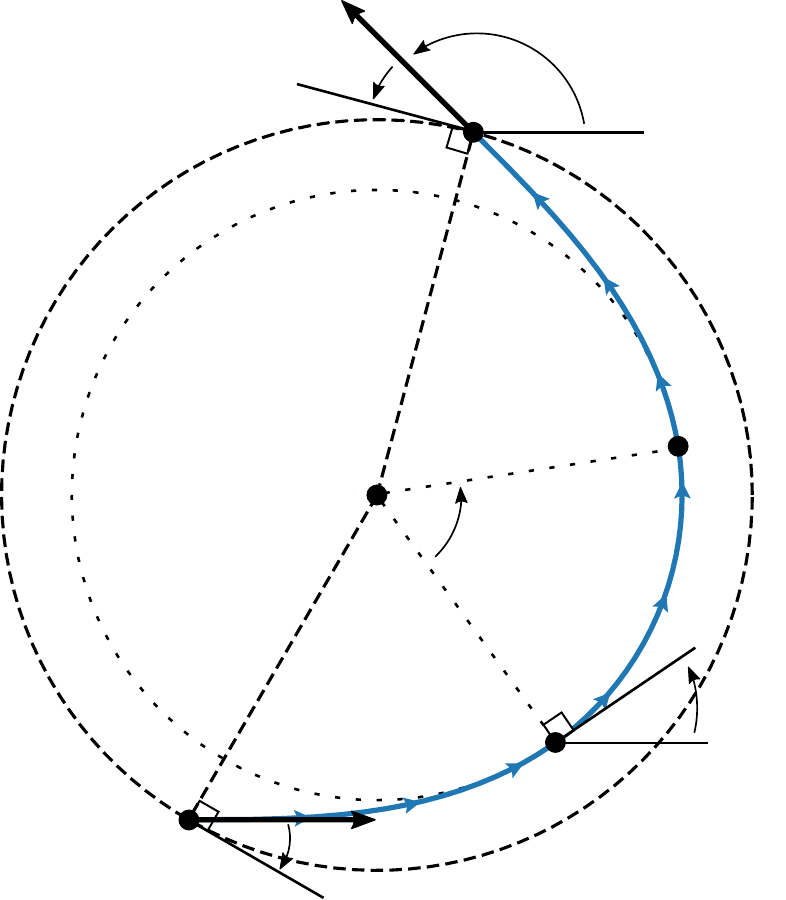
        \caption{}
    \end{subfigure}%
    \begin{subfigure}{.5\textwidth}
        \centering
        \def\svgwidth{1\linewidth}
        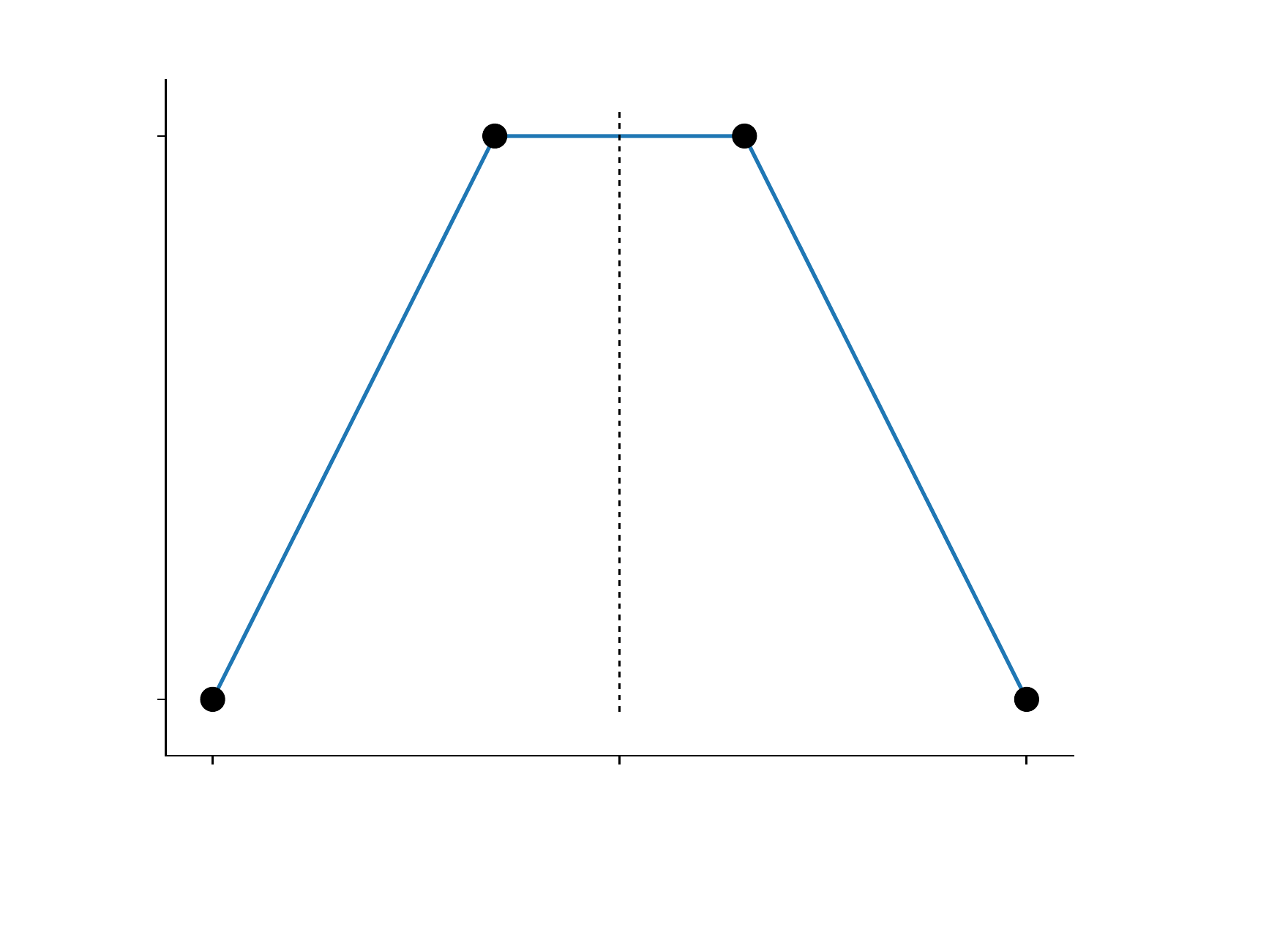
        \caption{}
    \end{subfigure}
    \caption[None]{It should be noted that though this figure is of a left turn starting with zero initial conditions, it can easily be extended to any translation or rotation and mirrored across the $x$-axis to represent a right turn. (a) Figure depicting the generation of a time-dependent CCTurn trajectory that achieves a change in heading of $\delta$ radians from starting at $w_s$ and ending at $w_e$.
 (b) Plot demonstrating the curvature continuity of a CCTurn as a function of time where $t_e$ represents the end time at the point $w_e$. Note that the curvature profile is mirrored across the midpoint of time.}
    \label{fig:CCTraj_defined}
\end{figure*}

As mentioned in the introduction, CCPaths are an established method
to plan paths that honor the curvature constraints of a wheeled vehicle.
To demonstrate the utility of the trajectory following controller
developed in previous sections, we present an extension of adding
a time index to CCPaths, and we call the result a continuous curvature
trajectory (CCTrajectory). Thus, a CCTrajectory is a time indexed
CCPath with a fixed positive velocity that connects two oriented waypoints
by concatenating turns of maximum change in curvature, referred to
as CCTurns, with straight line segments as depicted in Figure \ref{fig:clothoid_connection}.
The planning of arcs and lines in a CCTrajectory is performed in the
same way as a CCPath. As a result, the reader is referred to \cite{Fraichard2004}
for planning details, and the following is focused on CCTurns generation.

Figure \ref{fig:CCTraj_defined} shows how a time-dependent CCTurn
is developed using transition-arcs, circular-arcs, and linear paths
to achieve a desired change in heading. A CCTurn consists of four
oriented waypoints ($\omega_{s}$, $\omega_{cs}$, $\omega_{ce}$,
$\omega_{e}$) which create a change of heading equal to $\delta$,
in the following three segments:
\begin{enumerate}
\item Change from zero curvature to maximum curvature ($\omega_{s}$ to
$\omega_{cs}$)
\item Constant curvature arc ($\omega_{cs}$ to $\omega_{ce}$)
\item Change from maximum curvature to zero curvature ($\omega_{ce}$ to
$\omega_{e}$)
\end{enumerate}
where each waypoint consists of a position, orientation, starting
curvature, and linear change in curvature. In the following descriptions,
the time-indexing of each segment required to build a CCTrajectory
is explained in terms of the states in (\ref{eq:extended_dubins}),
which are used to calculate the position derivatives of the trajectory.

\subsubsection*{Change to Maximum Curvature}

To generate a time-dependent transition-arc trajectory from $w_{s}$
to $w_{cs}$ the model (\ref{eq:extended_dubins}) is integrated instead
of using (\ref{eq:fresnel_integrals}) to produce a Clothoid. Integrating
(\ref{eq:extended_dubins}) to obtain a transition arc ensures that
the spacing between points is consistent with the vehicle's desired
velocity, and it produces the necessary states at each point to solve
for the position derivatives used for vehicle control. The CCTurn
starts at $w_{s}$ with zero curvature, and it turns with the maximum
change in curvature until $\kappa=\kappa_{\text{max}}$. The corresponding
waypoints for the segment $w_{s}$ and $w_{cs}$ are
\begin{align}
w_{s}=\begin{cases}
x_{s} & \negthickspace=0\\
y & \negthickspace=0\\
\psi_{s} & \negthickspace=0\\
\kappa_{s} & \negthickspace=0\\
\sigma_{s} & \negthickspace=\sigma_{\text{max}}
\end{cases},\quad w_{cs}=\begin{cases}
x_{cs} & \negthickspace=\int_{0}^{t_{cs}}v\cos\left(\psi\left(t\right)\right)dt\\
y_{cs} & \negthickspace=\int_{0}^{t_{cs}}v\sin\left(\psi\left(t\right)\right)dt\\
\psi_{cs} & \negthickspace=\int_{0}^{t_{cs}}v\kappa\left(t\right)dt\\
\kappa_{cs} & \negthickspace=\kappa_{\text{max}}\\
\sigma_{cs} & \negthickspace=0
\end{cases}\label{eq:converge_max_curvature}
\end{align}
where the time at $w_{cs}$ is $t_{cs}=v\frac{\kappa_{max}}{\sigma_{\text{max}}}$,
and $v$ is a constant desired velocity.

\subsubsection*{Constant Curvature Arc}

After the trajectory reaches maximum curvature, it begins the creation
of a circular-arc at the point $w_{cs}$. To generate a time-dependent
circular arc (the region between $w_{cs}$ and $w_{ce}$ in Figure
\ref{fig:CCTraj_defined}), it is unnecessary to integrate (\ref{eq:extended_dubins})
to obtain the trajectory position because the closed-form equation
of a circular arc exists. The distance between points can be analytically
computed on the perimeter of the circular arc as 
\begin{align}
ds & =v\cdot dt\label{eq:trajectory_spacing}
\end{align}
where $dt$ is the discrete-time integration step, and the total circular-arc
length becomes 
\begin{align}
s & =\left(\delta-2\psi_{cs}\right)\kappa_{\text{max}}^{-1}.
\end{align}
Though (\ref{eq:extended_dubins}) does not need to be integrated
to obtain the position states of the circular trajectory, it can be
used to solve the other system states at each point. The system states
for every index on the circular arc are 
\begin{align}
\begin{split}\psi\left(t\right) & =\psi_{c}\left(t\right)+d_{c}\frac{\pi}{2}\\
\kappa\left(t\right) & =\kappa_{\text{max}},\;\sigma\left(t\right)=0\\
v\left(t\right) & =\text{const},\;\alpha\left(t\right)=0
\end{split}
\end{align}
where $\psi_{c}$ is the angle of the zero motion line referenced
from the instantaneous center of rotation and $d_{c}$ is the vehicle's
direction of motion such that a clockwise rotation is $d_{c}=1$ and
a counterclockwise rotation is $d_{c}=-1$.

\subsubsection*{Change to Zero Curvature}


\begin{figure*}[t]
    \centering
    \tiny
    \begin{subfigure}{.5\textwidth}
        \centering
        \def\svgwidth{1\linewidth}
        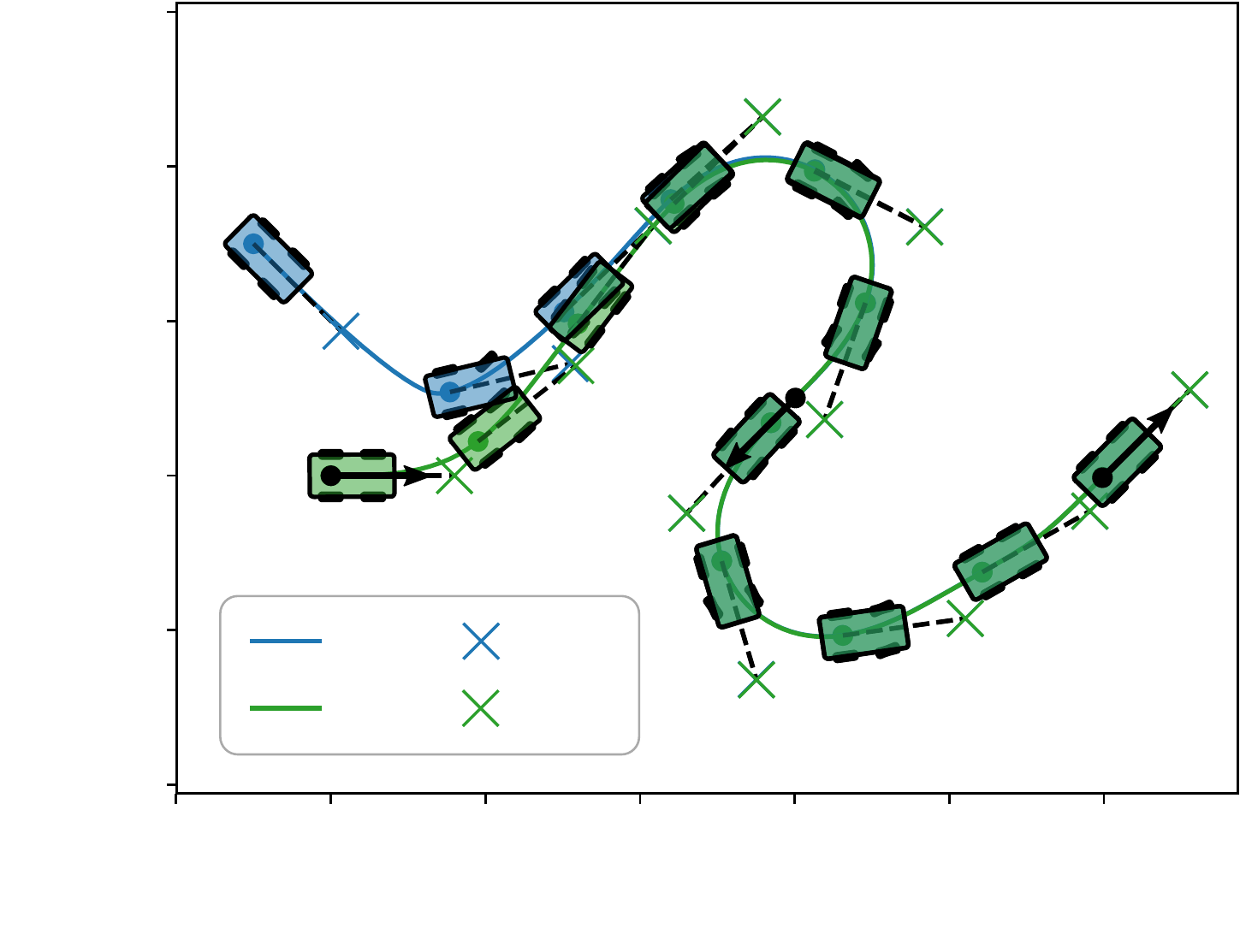
        \caption{}
    \end{subfigure}%
    \begin{subfigure}{.5\textwidth}
        \centering
        \def\svgwidth{1\linewidth}
        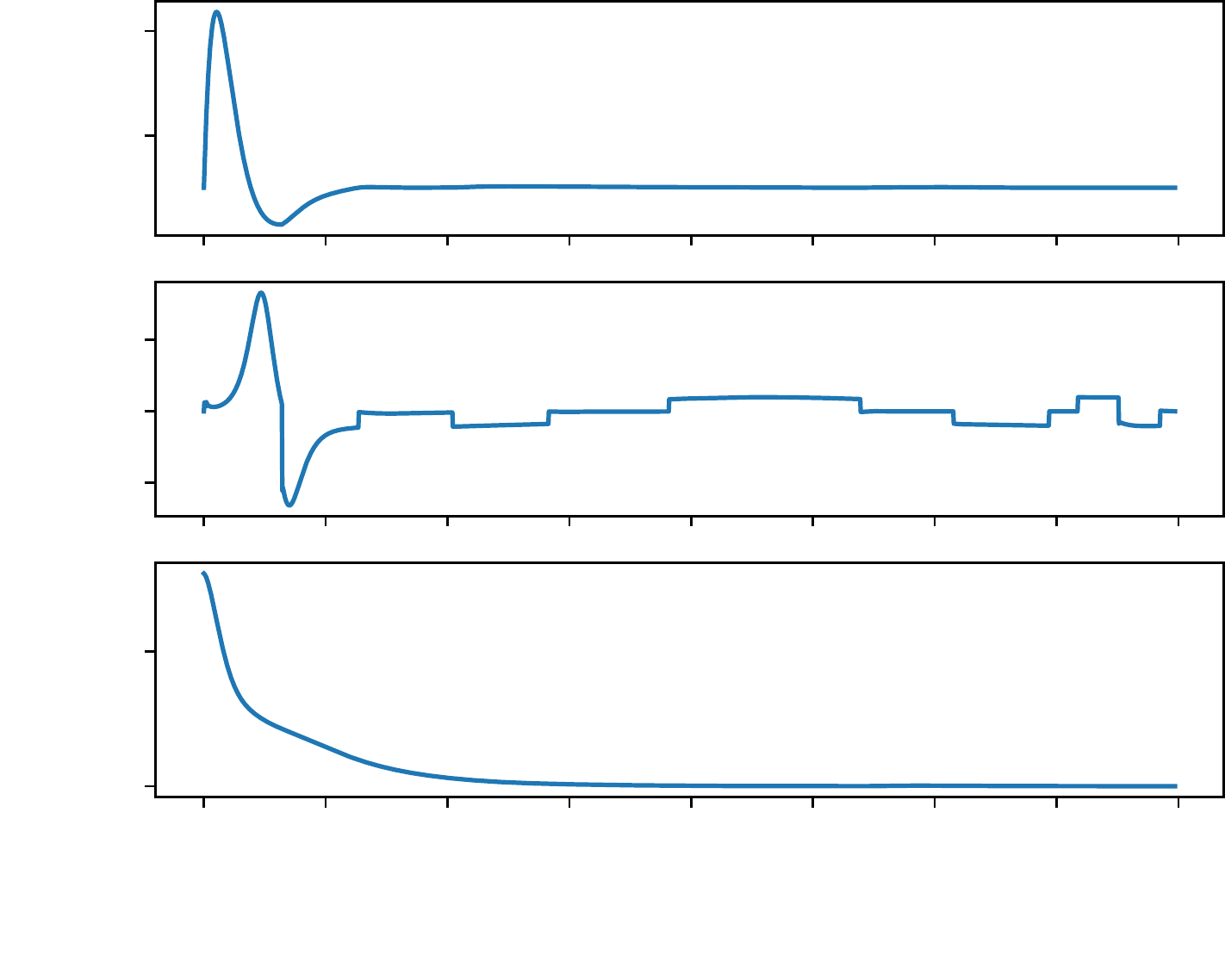
        \caption{}
    \end{subfigure}
    \caption[None]{(a) Deomonstration of zero-error $\epsilon$-trajectory tracking showing that the vehicle $\underbar{x}$ asymptotically converges to $\underbar{x}_{r}$, where $\underbar{x}_{r}$ is a CCTrajectory that passes through the waypoints $w_1$, $w_2$, and $w_3$.
 (b) Analysis of the system velocity $v$, steering rate input $\xi$, and position error between $\underbar{x}$ and $\underbar{x}_{r}$ during the demonstration of zero-error $\epsilon$-trajectory tracking.}
    \label{fig:clothoid_tracking}
\end{figure*}
As for generating the second transition arc on the CCTurn from $w_{ce}$
to $w_{e}$, it can be computed by again integrating (\ref{eq:extended_dubins})
starting with a curvature of $\kappa_{\text{max}}$ and decreasing
the curvature at the maximum rate until it reaches $0$. Equation
(\ref{eq:converge_max_curvature}) can be used in reverse as the second
transition arc is a mirror image of the first.

Finally, to generate a time-dependent linear trajectory used to connect
two CCTurns, no integration is necessary. The distance between each
point on the line is equal to (\ref{eq:trajectory_spacing}), and
the corresponding system states are 
\begin{align}
\begin{split}\psi\left(t\right) & =\text{atan2}\left(p_{2y}-p_{1y},p_{2x}-p_{1x}\right)\\
\kappa\left(t\right) & =0,\;\sigma=0\\
v\left(t\right) & =\text{const},\;\alpha=0
\end{split}
\label{eq:straight_cctraj}
\end{align}
such that $p_{1}=\left[p_{1x},p_{1y}\right]^{T}$ and $p_{2}=\left[p_{2x},p_{2y}\right]^{T}$
are the points that linearly connect two CCTurns. After the pieces
of the CCTrajectory are generated, they are concatenated to make a
continuous trajectory, and the position derivatives of the reference
trajectory are obtained by differentiating $x$ and $y$ from (\ref{eq:extended_dubins}).

\subsection{Simulation Results\label{sec:Simulation-Results}}

An example is now presented to demonstrate zero-error $\epsilon$-trajectory
tracking of a CCTrajectory. The example uses a value of $\epsilon$$\,=\,$$5\,m$
to slow the convergence of the pulling phase for the sake of illustration
but in practice the value of $\epsilon$ is system dependent. The
scenario uses a CCTrajectory with a velocity of $5\,m/s$ and three
waypoints of the form $\omega_{i}=\left[x_{i},y_{i},\text{\ensuremath{\psi}}_{i}\right]^{T}$
where $\left(x_{i},y_{i}\right)$ is the position, and $\psi$ is
the heading of the waypoint $\omega_{i}$. The waypoints are located
at $w_{1}=\left[0,0,0\right]$, $w_{2}=\left[30,5,\frac{5\pi}{4}\right]$,
and $w_{3}=\left[50,0,\frac{\pi}{4}\right]$. Figure \ref{fig:clothoid_tracking}
(a) shows the path connection between the points with a $\kappa_{\text{max}}=\pm2.7m^{-1}$
and $\sigma_{\text{max}}=\pm0.17\ \left(ms\right)^{-1}$. Figure \ref{fig:clothoid_tracking}
(b) shows the convergence to zero-error tracking.

\section{Conclusion\label{sec:Conclusion}}

In this paper, a controller for kinematic vehicles with acceleration
inputs and first-order nonholonomic constraints has been presented
that results in zero-error trajectory tracking. The controller uses
partial feedback linearization to control a new reference point in
front of the vehicle's rear axle. However, controlling a new reference
point results in a steady-state error between the original reference
point and the desired reference trajectory. Therefore, the generation
of a new trajectory is proposed that when tracked by the new reference
point, results in zero-error vehicle tracking of the original reference
trajectory. It is proven that any trajectory that is twice differentiable
guarantees zero-error tracking using the proposed controller when
the vehicle is at least within $\pm\frac{\pi}{2}$ radians of the
direction of the new trajectory. This paper further demonstrated how
to generate time-dependent, continuous curvature trajectories that
connect waypoints, given a maximum curvature and change in curvature.
The presented controller is proven to asymptotically track these trajectories,
and the results are demonstrated through simulation.

\bibliographystyle{spmpsci}
\bibliography{bibliography.bib}

\appendix\normalsize

\section{Calculating System States for a Unicycle Robot Following a Reference
Trajectory \label{app:Calculating-System-States}}

Given a reference trajectory at a specific point in time, $\underbar{x}_{r}(t)$,
this appendix derives the corresponding unicycle state that would
perfectly follow the reference.

The velocity can be found using the first two lines in (\ref{eq:unicycle_dynamics}).
Note that $\begin{bmatrix}\cos(\psi) & \sin(\psi)\end{bmatrix}^{T}$
is a unit vector. Assuming $v\geq0$, $v$ is simply the magnitude
of the position derivatives: 
\begin{equation}
v=\left\Vert \begin{bmatrix}\dot{x}\\
\dot{y}
\end{bmatrix}\right\Vert =\sqrt{\dot{x}^{2}+\dot{y}^{2}}\label{eq:app:velocity}
\end{equation}
The longitudinal acceleration can be found by directly differentiating
(\ref{eq:app:velocity}) to obtain 
\begin{equation}
a=\dot{v}=\frac{d}{dt}\left(\dot{x}^{2}+\dot{y}^{2}\right)^{1/2}=\left(\dot{x}\ddot{x}+\dot{y}\ddot{y}\right)v^{-1}
\end{equation}
As the heading is the direction of motion, it can be found using the
arc-tangent of the velocity vector: 
\begin{align}
\psi & =\text{atan2}\left(\dot{y},\dot{x}\right).\label{eq:heading}
\end{align}
Differentiating (\ref{eq:heading}), the angular velocity and acceleration
can be found as 
\begin{align}
\omega & =\frac{d}{dt}\left(\text{tan}^{-1}\left(\frac{\dot{y}}{\dot{x}}\right)\right)=\left(\dot{x}\ddot{y}-\dot{y}\ddot{x}\right)v^{-2}
\end{align}
\begin{equation}
\begin{split}\alpha & =\frac{d}{dt}\Bigl(\dot{x}\ddot{y}-\dot{y}\ddot{x}\Bigr)v^{-2}+(\dot{x}\ddot{y}-\dot{y}\ddot{x})\frac{d}{dt}\Bigl(v^{-2}\Bigr)\\
 & =(\dot{x}y^{(3)}-\dot{y}x^{(3)})v^{-2}-2(\dot{x}\ddot{y}-\dot{y}\ddot{x})v^{-3}\dot{v}\\
 & =(\dot{x}y^{(3)}-\dot{y}x^{(3)})v^{-2}-2a\omega v^{-1}
\end{split}
\end{equation}
which gives the results presented in (\ref{eq:reference_trajectory_flat_states}).

\section{Negative Definite Lyapunov Candidate\label{app:neg_definite_lyap}}

This appendix shows that (\ref{eq:sign_equality}) holds when $\left|\psi_{\epsilon r}-\psi\right|<\frac{\pi}{2}$.
Let $e_{\psi}=\psi_{\epsilon r}-\psi$ and $e_{\psi_{r}}=\psi_{\epsilon r}-\psi_{r}$,
and let 
\begin{align}
e & =\psi-\psi_{r}=e_{\psi_{r}}-e_{\psi},
\end{align}
which allows (\ref{eq:sign_equality}) to be written as 
\begin{align}
\text{sign}\left(e\right) & =\text{sign}\left(\sin\left(e_{\psi_{r}}\right)-\sin\left(e_{\psi}\right)\right).\label{eq:sign_equality_rewritten}
\end{align}

Note that sine is monotonically increasing on the interval $\begin{bmatrix}-\frac{\pi}{2} & \frac{\pi}{2}\end{bmatrix}$.
Thus, to show that (\ref{eq:sign_equality_rewritten}) holds when
$\left|e_{\psi}\right|<\frac{\pi}{2}$, it is sufficient to show that
$\left|e_{\psi_{r}}\right|<\frac{\pi}{2}$.

The angle $\psi_{\epsilon r}$ is defined as $\psi_{\epsilon r}=\text{atan2}\left(\dot{y}_{\epsilon r},\dot{x}_{\epsilon r}\right)$
where 
\begin{align}
\begin{split}\dot{\underbar{q}}_{\epsilon r}=R_{\epsilon r}\underbar{v}_{r} & =\begin{bmatrix}\begin{array}{cc}
\cos\psi_{r} & -\epsilon\sin\psi_{r}\\
\sin\psi_{r} & \epsilon\cos\psi_{r}
\end{array}\end{bmatrix}\begin{bmatrix}v_{r}\\
\omega_{r}
\end{bmatrix}\\
 & =\begin{bmatrix}\cos\psi_{r}\\
\sin\psi_{r}
\end{bmatrix}v_{r}+\begin{bmatrix}-\sin\psi_{r}\\
\cos\psi_{r}
\end{bmatrix}\epsilon\omega_{r}
\end{split}
\end{align}
Since $v$ is constrained to be positive, the direction vector that
produces $\psi_{\epsilon r}$ has a component pointing in the same
direction as $\psi_{r}$. The $\epsilon\omega_{r}$ term points in
a direction orthogonal to that of $\psi_{r}$. However, the linear
combination of the two vectors will produce a vector oriented in the
half-plane defined by $\psi_{r}$, thus $\left|\psi_{\epsilon r}-\psi_{r}\right|<\frac{\pi}{2}$
with $\left|\psi_{\epsilon r}-\psi_{r}\right|\rightarrow\frac{\pi}{2}$
as $v\rightarrow0$ or $\omega\rightarrow\infty$ and $\left|e_{\psi_{r}}\right|<\frac{\pi}{2}$
holds true.

\end{document}